\def\BState{\State\hskip-\ALG@thistlm}
\newtheorem{theorem}{Theorem}
\newtheorem{proposition}[theorem]{Proposition}
\theoremstyle{definition}
\newtheorem{example}{Example}
\newtheorem{Problem Statement}{Problem Statement}
\let\bm\boldsymbol
\DeclareMathOperator*{\argmin}{arg\,min}
\definecolor{blue}{rgb}{0,0,1}
\definecolor{cof}{RGB}{219,144,71}
\definecolor{pur}{RGB}{186,146,162}
\definecolor{greeo}{RGB}{91,173,69}
\definecolor{greet}{RGB}{52,111,72}
\begin{document}
\title{Information Diffusion of Topic Propagation in Social Media}

\author[1]{Shahin Mahdizadehaghdam\thanks{smahdiz@ncsu.edu}}
\author[1]{Han Wang\thanks{hwang42@ncsu.edu}}
\author[1]{Hamid Krim\thanks{ahk@ncsu.edu}}
\author[2]{Liyi Dai\thanks{liyi.dai.civ@mail.mil}}
\affil[1]{Department of Electrical and Computer Engineering, \authorcr
 North Carolina State University, Raleigh, NC}
\affil[2]{Army Research Office, RTP, Raleigh NC}
\date{}
\maketitle

\begin{abstract}
Real-world social and/or operational networks consist of agents with associated states, whose connectivity forms complex topologies. 
This complexity is further compounded by interconnected information layers, consisting, for instance, documents/resources of the agents which mutually share topical similarities.
Our goal in this work is to predict the specific states of the agents, as their observed resources evolve in time and get updated.
The information diffusion among the agents and the publications themselves effectively result in a dynamic process which we capture by an interconnected system of networks (i.e. layered).
More specifically, we use a notion of a supra-Laplacian matrix to address such a generalized diffusion of an interconnected network starting with the classical "graph Laplacian".
The auxiliary and external input update is modeled by a  multidimensional Brownian process, yielding two contributions to the variations in the states of the agents: one that is due to the intrinsic interactions in the network system, and the other due to the external inputs or innovations.
A variation on this theme, a priori knowledge of a fraction of the agents' states is shown to lead to a Kalman predictor problem. This helps us refine the predicted states exploiting the error in estimating the states of agents. 

Three real-world datasets are used to evaluate and validate the information diffusion process in this novel layered network approach. Our results demonstrate a lower prediction error when using the interconnected network rather than the single connectivity layer between the agents.
The prediction error is further improved by using the estimated diffusion connection and by applying the Kalman approach with partial observations.
\end{abstract}
\begin{IEEEkeywords}
Multi-layer network, Information diffusion, Topic propagation, Computer networks.
\end{IEEEkeywords}
\section{Introduction}
\label{sec:introduction}
The emergence and rapid growth of the Internet have led to far greater  access to and exchange of information between machines and among people. Social media  such as Facebook, Twitter, and LinkedIn, have indeed connected billions of people across the planet, and have in addition almost trivialized information and resource exchange. As a physical process, the information flow---through its rather extensive connectivity and resulting frequent exchanges among the nodes, as connecting entities---is a pervasive diffusion. The high activity over a typical network entails active dynamics, and hence
 large variations in content.                                                                                                                                                                                                                                                                                                                                                                                                                                         
The states of agents (nodes) in such a network are consequently actively altered by the diffusion of information, according to associated dynamical parameters of the network. Throughout this work, we use the general term "agent" to refer to users in social networks, bloggers, authors, or any individual who produces information content, and is networked with others. Agents are usually/invariably interested in different topics irrespective of the structure of the network. Bloggers, for example, may write about specific topics, while different authors may only publish articles about their own research areas. 
 
The interest of a given agent may, also, evolve over time, as a result of diffusion of information from other agents or sources.
This work is, to the best of our knowledge, first in addressing this problem in its full generality thus, in particular, we propose {\em layered and interconnected  networks} to beyond a two-layer network, as has appeared in the literature \cite{gomez2013diffusion}. This in turn allows diffusion to take place over many layers (a truly {\em multi-layer diffusion network}). 
 
To proceed, we note that the diffusion phenomenon in a network and our later formulation will also be useful in predicting the future state of agents in the network. Specifically, information diffusion process on the topical states of agents is further clarified with the following fundamental working hypotheses:
\begin{figure*}[t]
\centering
	\begin{subfigure}[t]{0.31\textwidth}
		\centering
		\includegraphics[scale=0.5]
		{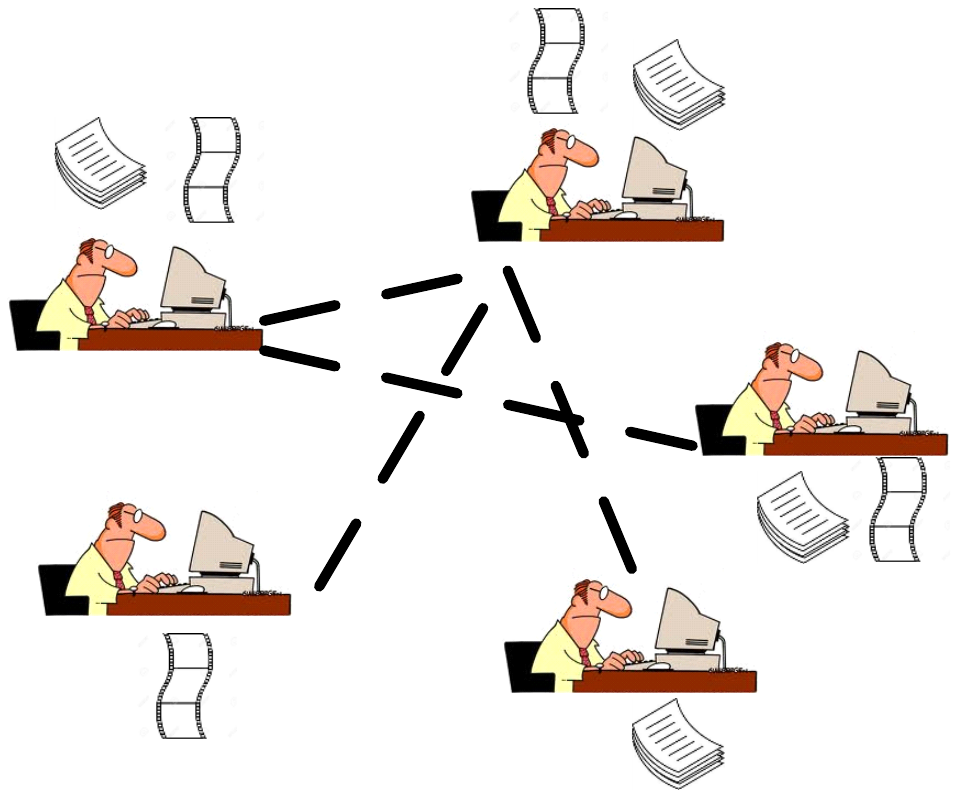}
		\caption{\footnotesize{Single layer network.}}
		\label{fig:single_layer}
	\end{subfigure}
	\begin{subfigure}[t]{0.31\textwidth}
		\centering
		\includegraphics[scale=0.8]
		{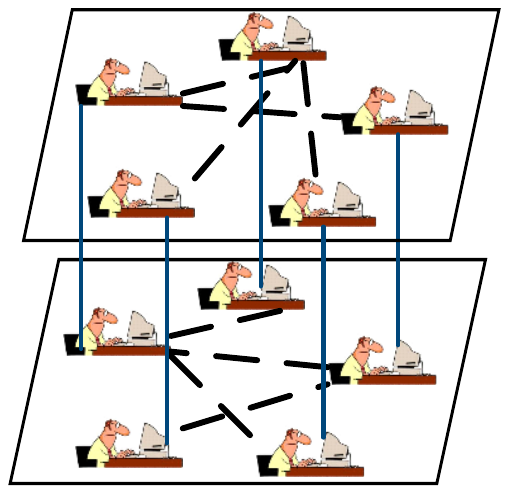}
		\caption{\footnotesize{Multiplex network.}}
		\label{fig:multi_layer}
	\end{subfigure}
	\begin{subfigure}[t]{0.31\textwidth}
    	\centering
		\includegraphics[scale=0.38]
		{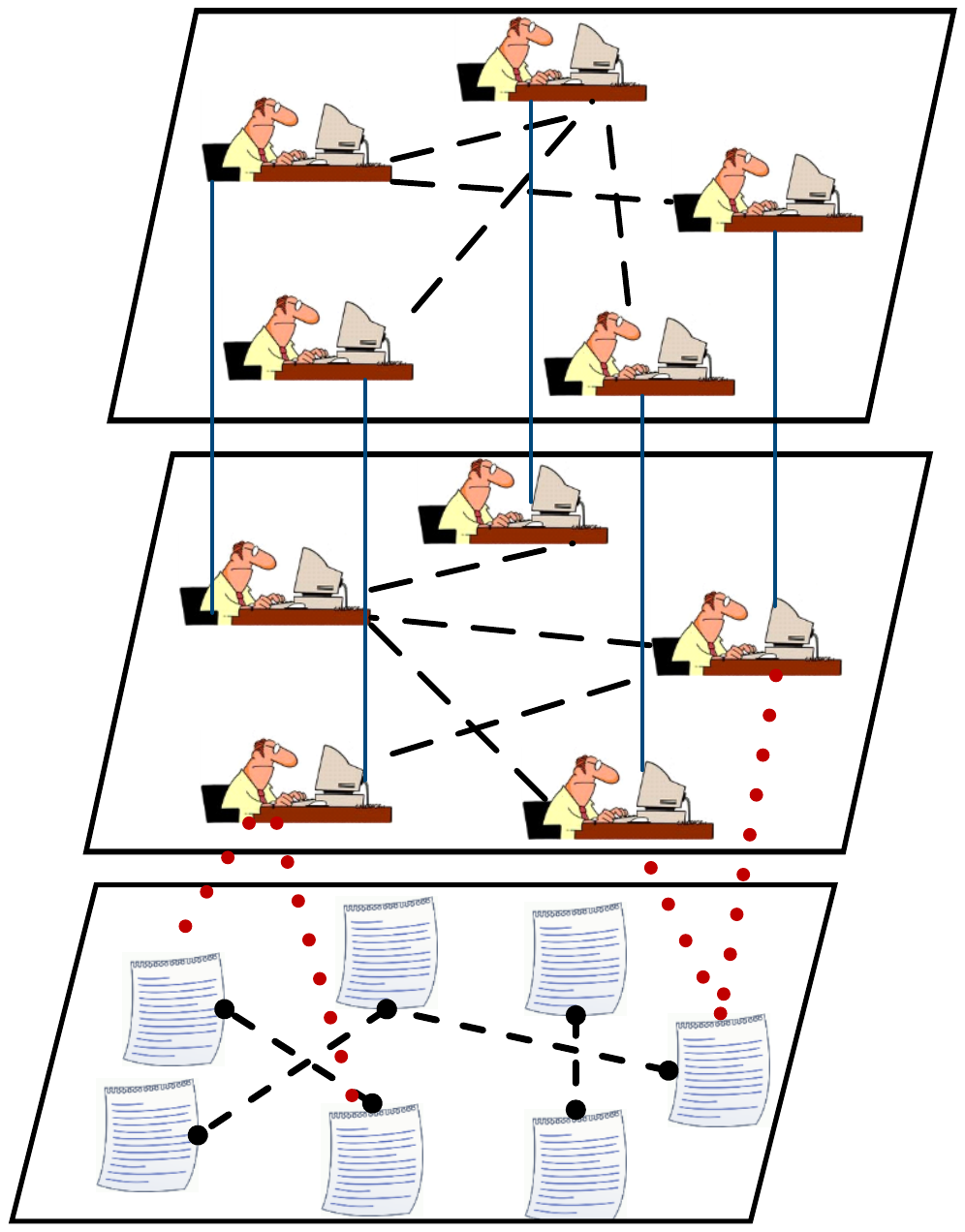}
		\caption{\footnotesize{Interconnected network of heterogeneous nodes of agents and documents.}}
		\label{fig:hetro_layer}
	\end{subfigure}
	\caption{\footnotesize{a) A single layer network between bloggers. Dashed edges show connections between  bloggers. b) A multiplex network between bloggers. The top layer is based on the hyperlink inter-connectivity of the bloggers, and the bottom layer is the friendship network between the bloggers. The straight inter-layer edges are showing that the bloggers are the same people in both layers. c) An interconnected network of heterogeneous nodes of agents and documents. The dotted edges are showing which blogger (agent) has produced which document. A similar set of dotted inter-layer edges are present from the top layer to the bottom layer which has not been depicted in the picture.}}
	\label{fig:connectivity}
\end{figure*}

 \emph{1) Information and State of  Agents:}\\ 
 To successfully model the information diffusion, it is first imperative to carefully define the type of dynamics which characterize the diffusion process. Many existing diffusion-based methods have associated characteristics such as infections, epidemics, diseases, viruses or contagions which can spread over the network. For completeness, we briefly describe some of these works:
in Susceptible-Infected-Susceptible (SIS) models \cite{SIS_M1, SIS_M2, SIS_M3} the infection can spread from any infected agent to its neighbors under certain conditions and sets of probabilities,  and the agents can  either be in susceptible or infected states. 
In this case, an agent in an infected state, will switch to a susceptible state over time, with some probability $p$. However, in Susceptible-Infected-Recovered (SIR) models \cite{SIR_M1, SIR_M2, SIR_M3, SIR_M4} the agents will switch to the recovered state, and may not catch the infection again. 
In addition, one commonly defines the state of an agent as active or a passive \cite{active_passive_1, active_passive_2}. The active agents will influence their neighbors to activate/trigger them.

Despite the subtle differences in these  models, the state space of the agents is usually binary indicating influence from a certain information.
While these types of diffusions may capture some dynamics, they present serious limitations in any practical network settings. 
A case in point is that of a diffusion process due to a blogger's read of other blogs in a network, and whose state/opinion changes as a result. This in turn, changes the dynamics of the network.  In Instagrams,  users may "repost" each other's images (with or without modification of the original image), thus implying that the  information  diffusing  through the network may come from text, image, video, and other types of documents. 




 \emph{2) The connectivity model:}\\ 
Information diffusion over a network is intimately related to the connectivity structure. While the connectivity model describes nodes with a potential  to influence/(or be influenced by) others, it may assign a quantitative value on the relation of the agents defined by  their mutual influence.
 Graphs are  most commonly used to model the connectivity among agents who are represented by nodes, and whose connection is depicted by edges.

 In the aforementioned "bloggers" example,  connectivity between bloggers highlights the fact that they read each other's blog.  Fig. (\ref{fig:single_layer}) shows a single layer connectivity structure between the bloggers; but various structures between them are also possible. We may, for instance, consider the information about the hyperlink connectivity as well as the friendship network between the bloggers. Fig. (\ref{fig:multi_layer}) depicts a two-layer connectivity structure (multiplex with two layers) between bloggers. 

 


In this work we define the state of each agent as a feature vector. Specifically, we will consider the state of each agent as a topic vector, to reflect the extent to which  an agent is associated with  specific topics. Our goal in this endeavor is to also provide the evolution of the topic vector of each agent over time. The goals and contributions in this paper are: creating a multi-layer diffusion network among agents and information contents, estimating the future state of agents, learning the structure of the supra-Laplacian matrix from previous diffusion history, and predicting the future states assuming that
a partial observation of the states of agents is available.

\begin{description} [leftmargin=0in]
  \item[1)] We go beyond the two-layer case to account for topical connectivity among the content nodes (sources of data characteristic of the information permeating the interconnected network) thus yielding a multi-layer network.  In our example, blogs consist of a set of documents generated by  bloggers over time, and  as illustrated in Fig. (\ref{fig:hetro_layer}). 
These document sets may share some topical similarities  (in spite of their distinct blogger sources). 
Structuring these relational properties into a network of document sets, introduces additional connecting edges between agents and document sets, as well as among agents. These additional paths will diffuse information at a secondary degree, specifically, an absence of a direct connection between two bloggers, does not exclude  one blogger getting updated on another blogger, and this on account of topical similarities between documents, they are associated with. We will refer to this complex structure of connected networks, as an {\em  interconnected network} of heterogeneous nodes, with advantages further discussed in  Section \ref{sec:ProblemFormalization}.
This structure will help us understand the process in which the author's topical interests are changing from a topic to another,  or why different topics are getting different amounts of attention by users in online social networks. 

\item[2)] While the current states of agents represent their engagement in different topics, they will evolve over time to reflect new documents being produced, hence updating the agents' states. 
We propose to estimate the state of the agents at time $t_1$, knowing their state at time $t_0<t_1$ with the help of the interconnected network. This is accomplished by considering a multi-layer diffusion network among the agents and documents.

\item[3)] In many cases, the actual diffusion among the agents is different from the fixed connectivity which we observe in the network (e.g, for a given friendship network between bloggers, a connection between two friend bloggers does not necessarily imply an influence of one on the other.).
By observing diffusion related patterns over a short period of time, we particularly attempt to understand such a \emph{static} influence network (reasonably for a short time interval), which has the presumably close connection to agents involvement in some common interest, with the capacity of leading to a common consensus. To that end, we learn the structure of the supra-Laplacian matrix of the underlying network of influence, which we subsequently use that in our prediction phase.

\item[4)] In certain online social networks, one often encounters a set of conservative users with strict privacy policy and a fraction of hub nodes with public information, and this clearly alters the previously designed state prediction problem.
We exploit the flexibility of the Kalman filter to address the problem, assuming that a partial observation of the states of agents is available. The Kalman filter enables us to refine the predicted states by exploiting the prediction error in the available states.
\end{description}

This paper is organized as follows: In Section \ref{sec:Related} we discuss some background and related work to our contribution. 
Section \ref{sec:ProblemFormalization} describes our basic formulation of the  problem.
We propose our new approach  in Section \ref{sec:ProposedMethod},  and present substantiating experimental  results in Section \ref{sec:Results}. We  provide some concluding remarks in Section \ref{sec:conclusion}.
\section{Related Works}
\label{sec:Related}
In recent years, a number of studies of real-networks have focused on epidemic spreading in multiplex networks.
In \cite{Wei2014JSAC}, two epidemies spreading in a two-layered multiplex network was addressed. While agents in the two-layered multiplex network are the same, their connectivity structures are different. In the state transition diagram of the model, each agent can be infected either by the first epidemy ($I_1$), or by the second epidemy ($I_2$), or possibly susceptible state ($S$). The suggested model is in the form of $S I_1 I_2 S$. The infected agent by the first epidemy should first get into a susceptible state to later get infected by the second epidemy. Also, thresholds determining the epidemies persistence, are defined. In \cite{Faryad2014PhysRevE}, the authors study a similar environment ($S I_1 S I_2 S$) in a two-layered multiplex network. The work focused on the long-term extinction, coexistence and  absolute dominance of the two epidemies with a different definition of dominance than that of  \cite{Wei2014JSAC}.
In \cite{gomez2013diffusion}, the authors discuss the diffusion dynamics in a two-layered multiplex network. The authors introduce supra-Laplacian matrix, and used a perturbation analysis to study the effects of changes in the spectral properties of single layer on the spectral properties of the whole network.
Similarly in \cite{Funk2010PhysRevE}, the authors study the spreading of two different processes in a multiplex network. Their work mostly focused on the impact of the degree correlation of the two layers on the epidemic properties of the processes.
The state space they considered, similar to many others was of binary nature (infected or not-infected) to model the spreading processes. 

We can classify the information diffusion process models into three major groups, probabilistic models, thermodynamic models, and counting models.
NETINF \cite{gomez2010inferring}, NETRATE \cite{rodriguez2011uncovering} and INFOPATH \cite{gomez2013structure} are the probabilistic models which infer the underlying diffusion network between information sources using consecutive hit times of the nodes by a specific cascade. The diffusion network was the result of solving an optimization problem which is computationally very expensive (super-exponential).
The main idea behind the thermodynamic models \cite{newman2010networks, escolano2012heat, gomez2013diffusion, sole2013spectral} is that heat will propagate from a warmer region to a colder region or gas will move from the region with higher density to the region with lower density. Modeling the information as heat or gas, we can write the rate at which information is changing in agent $i$ as: $\frac{d\psi_i}{dt}=D\sum_j \textbf{A}{(i, j)}(\psi_j-\psi_i)$. Where $\psi_i(t)$ is the state of the $i^{th}$ agent at time $t$, $D$ is the diffusion constant which is the amount of information passing from an agent to another agent in a small interval of time, and $\textbf{A}{(i, j)}$, the so-called adjacency matrix, reflects the connectivity between agents $i$ and $j$.
The counting models \cite{hethcote2000mathematics_SI, SIR, SIS} form counting processes to find the number of nodes in each group of susceptible or infected nodes.
Assuming that each agent has $\beta$ contacts with other agents per unit time, and in each contact of an infected agent with a susceptible node, the diseases will definitely spread, the overall rate of new infections is $\beta S X/n$, where $n$ is the total number of nodes. $X$ and $S$ are the number of infected and susceptible individuals respectively. We therefore, can write the rate of change of $X$ and $S$ as $\frac{dX}{dt}=\beta \frac{S X}{n}$ and $\frac{dS}{dt}= - \beta \frac{S X}{n}$.

In this paper we address following limitations from the aforementioned information diffusion models.

\emph{ A) }Almost all existing information diffusion models are based on  the contagion principle, using an abstract term or a clear data tag.  Specifically, many  information diffusion models in social networks, for example,  consider a specific "Hashtag" or repost as contagion \cite{gomez2013diffusion, sole2013spectral, escolano2012heat, gomez2010inferring}.
We would, thus, need to advance existing information diffusion models, if one were interested in, tracking the way bloggers influence  each other, or how articles and related research fields vary  over time, or parsing  through very large multi-modal data sets.

\emph{ B) }  A few information diffusion models \cite{gomez2013diffusion, sole2013spectral} have recently considered  multiplex connectivity models. The overwhelming majority of these works model the connectivity by just using simple graphs. 
Agents may, however, display several forms of connectivity, through multiple online social networks, or other social clubs, thus making   simple graph models  highly biased for capturing  the real world complexity of information diffusion models.

\emph{ C) } Existing models are layered by lack of considering external perturbations/inputs.
Specifically, considering the bloggers' example, the documentation evolves with time, and so do the nodes of the network, hence  impacting the mutual influence the agents have on each other. This is indeed key to one's ability to proceed to any future state prediction on the network.
In the present work, we model each agent state vector as a topic vector, effectively as a topic proportion associated with the agent's own documents. In so doing,  our solution predicts the users' future topic vectors by considering their interactions inside the interconnect network  agents, as well as  those of the documents. The latter network is allowed to vary over time on account of both the diffusion effect as well as that of the external input as new  documents are added. 

Our work is similar to the research in \cite{gomez2013diffusion}, we however, consider the interconnected network of heterogeneous nodes of documents and agents (Fig. \ref{fig:hetro_layer}) rather than the simpler multiplex network (Fig. \ref{fig:multi_layer}) in \cite{gomez2013diffusion}. We will also generalize the scalar state space of agents to a higher dimensional state space of topic proportions. Furthermore, we will consider an open system, and we study the external effect on each agent from the agents or documents which have not been captured by the diffusion in the network.
\section{Problem Formulation}
\label{sec:ProblemFormalization}\

We represent agents within a social network setting, by nodes on a graph, and their connectivity by edges between them. The connectivity may reflect friendship, co-authorship, or other  forms of behavioral similarities. 
If we consider a set $V_A$ of $N$ agents, we can imagine $M_A$ different graphs $G_A^{(m)}=(V_A^{(m)},E_A^{(m)}),\; m=1,...,M_A$  based on the $M_A$ different connectivity structures among these $N$ agents. 
We will use throughout, the term ``agent-layer'' to refer to the network comprised of communicating  agents. 
In Fig. (\ref{fig:hetro_ProbForm}) we can see two agent-layers. 
We note in this case, that the agents are the same in both layers, and we hence  have a direct one to one  connections between them (the blue edges in Fig. (\ref{fig:hetro_ProbForm})). 

Agents are usually assumed to possess some data or to produce information in their embedding network. If, for instance,  the agents were bloggers or social network users, they would be producing blog-documents or their online accounts, if they were  researchers, they would be turning out scientific papers. 
These sets of documents, may also mutually store topical similarities which will play a role in the overall information diffusion throughout the network. Of central interest in our work, is the evolution of the topical states of the agents, as the information diffusion takes place. 
Latent Dirichlet Analysis (LDA) \cite{Blei:2003:LDA}, Latent Semantic Indexing (LSI)  \cite{Deer:1990:LSI} and  several other topic modeling algorithms afford  us the ability to represent any text document as state vectors of length $T$. $T$ is an arbitrarily chosen number of topics (dimension) in a collection of text documents. 
Considering the documents as vectors in a $T$ dimensional vector space, we can build a similarity network between these vectors using their euclidean inter-distances, as well as a neighborhood criterion between them (e.g. $K$-Nearest Neighbors algorithm, Epsilon-Neighborhood).
We consider a set $V_I$ of $S$ documents. The subscript $I$ specifies  the information nature of the network and of its members (i.e, it pertains to documents in this case), with $M_I$ different graphs $G_I^{(m)}=(V_I^{(m)},E_I^{(m)}),\; m=M_A+1,...,M_A+M_I$, each describing  a different connectivity structure among $S$ documents (the bottom layer in Fig. (\ref{fig:hetro_ProbForm})).
Note that similarly efficient tools (e.g. ISOMAP \cite{ISOMAP:2000}, Principal Component Analysis (PCA)  \cite{PCA:1987}, Multi-Dimensional Scaling (MDS) \cite{MDS:1952}) for dimension reduction are available for multi-modal documents (images, videos, ...), to possibly represent them as separate networks.
In this work, we refer to the network among the documents as the information-layer. Table \ref{tbl:notation} lists all notations used in this paper along with their definitions. 
\begin{figure}[!t]
\centering
\includegraphics[scale=0.4]{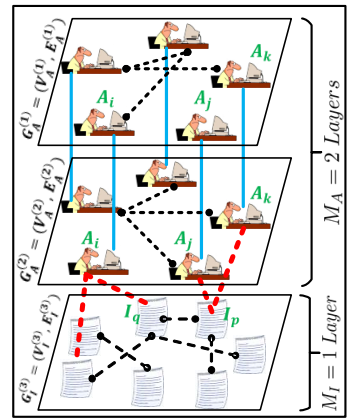}
\caption{\footnotesize{The interconnected network $\mathcal{I}$ (Eqn. (\ref{eq:interconnected_graph_I})) with two agent-layers and one information-layer. A set of inter-layer edges (similar to the red lines) are present from the top layer to the bottom layer which have not been depicted in the picture for simplicity.}}
\label{fig:hetro_ProbForm}
\end{figure}
\begin{table*}[!t]
\renewcommand{\arraystretch}{1.1}
\caption{Terminology}
\label{tbl:notation}
\begin{center}
\begin{tabular}{ | c | l || c | l | } 
  \hline
  \scriptsize{\textbf{Symbol}} & \scriptsize{\textbf{Definition}} & \scriptsize{\textbf{Symbol}} & \scriptsize{\textbf{Definition}} \\
  \hline
 \scriptsize{$A$, $I$} & \scriptsize{Agents} \& \scriptsize{documents resp.} & \scriptsize{$\textbf{W}^{(k)}_A$} & \scriptsize{Intra-layer connectivity matrix of agents in layer $k$.} \\
  \scriptsize{$A_i$, $I_p$} & \scriptsize{Agent $i$ \& documents $p$ resp.} & \scriptsize{$\textbf{W}_{A,I}^{(k,m)}$} &  \scriptsize{Inter-layer conn. of layer $k$ to $m$ among agents and docs.}\\
  \scriptsize{$G_A^{(m)}$, $G_I^{(m)}$} & \scriptsize{Intra-layer graphs of layer $m$.} & \scriptsize{$D^{(k)}_A$} & \scriptsize{Intra-layer diffusion constant of the agents in layer $k$.}\\
  \scriptsize{$\mathcal{I}$} &  \scriptsize{The interconnected graph.} & \scriptsize{$D_{A}^{(k,l)}$} & \scriptsize{Diffusion constant of agents from layer $k$ to $l$.}\\
  \scriptsize{$T$} & \scriptsize{Length of the topic vectors.} & \scriptsize{$D_{A,I}^{(k,m)}$} & \scriptsize{Diffusion const. of layer $k$ to $m$. among agents and docs.} \\ 
  \scriptsize{$\textbf{L}^{(m)}$} & \scriptsize{Laplacian matrix of layer $m$.} & \scriptsize{$N$, $S$, $P$} & \scriptsize{\# of agents, documents \& all nodes resp.}\\ 
  \scriptsize{$\textbf{x}_{A_i}^{(m)}$} &  \scriptsize{State of  agent $i$ in layer $m$.} &  \scriptsize{$\bm{\mathcal{L}}$} &  \scriptsize{Supra-Laplacian matrix.}\\
  \scriptsize{$\textbf{x}_{I_p}$} &  \scriptsize{State of  document $p$.} &  \scriptsize{$\textbf{X}$} &  \scriptsize{A $P \times T$ matrix consisted of the state of all nodes.}\\
  \hline
\end{tabular}
\end{center}
\end{table*}
\normalsize

By adopting an adapted notation of \cite{kivela2013multilayer}, we represent the document $p$ by $I_p$, and the agent $i$ by $A_i$.
The red inter-layer edges ($E_{A,I}$) in Fig. (\ref{fig:hetro_ProbForm}) relates agents to their respective documents (publisher-publication network). 
The information of document $I_p$ is denoted by a $T$ dimensional vector $\textbf{x}_{I_p} \in {\mathbb{R}^T}$. The state of agent $A_i$ in layer $m$, is defined as the average of the $T$ dimensional vectors of their associated documents:
\begin{equation}
\label{equ:Agent_initial_state}
\textbf{x}_{A_i}^{(m)}=\frac{1}{|\mathcal{N}_I(A_i)|} \sum_{p \in \mathcal{N}_I(A_i) } \textbf{x}_{I_p},
\end{equation}
where $\mathcal{N}_I(A_i)$ is the set of documents produced by agent $A_i$ or more formally, the neighbor set of  agent $A_i$ in the information-layer, which $|\mathcal{N}_I(A_i)|$ is the cardinality of that set.
The interconnected graph in Fig. (\ref{fig:hetro_ProbForm}) may thus be denoted by:
\scriptsize 
\begin{equation}
\label{eq:interconnected_graph_I}
\mathcal{I}=\bigg( \big( V_A^{(1)} \cup V_A^{(2)} \cup V_I^{(3)} \big),  \big( E_A^{(1)} \cup E_A^{(2)} \cup E_I^{(3)} \cup E_{A,A}^{(1,2)} \cup E_{A,I}^{(1,3)} \cup E_{A,I}^{(2,3)} \big) \bigg). 
\end{equation}
\normalsize

With this inter-connected formulation, and in analyzing an associated information diffusion process, we address in this paper,

\begin{itemize}
  \item The ability to achieve a diffusion of information between agents via multiple connectivity structures. The same set of agents may have different intra-layer connectivity  at  different layers;
  \item The ability to model networked documents in a separate network layer, will enable us to consider the similarities, and their evolution overtime between the documents as an information diffusion medium.
  For instance, agents $A_i$ and $A_j$ in Fig. (\ref{fig:hetro_ProbForm}) are not connected via any paths through the top two agent-layers $G_A^{(1)}$, and $G_A^{(2)}$. There is, however, a path between these two agents by way of the similarity of their documents $I_p$ and $I_q$ in $G^{(3)}_I$.
The intuition supporting the interaction between $A_i$ and the $A_j$ in the blogger's example, follows from: blogger $A_i$ noticing $I_p$ document's  similarity (written by blogger $A_j$) to his/her $I_q$ document, would read the document and being influenced, as a result.
  \item The preservation of the conventional information diffusion structures such as the co-authorship network in the introduced interconnected network model (e.g the agents $A_j$ and $A_k$ who have collaborated to produce the $I_p$). 
\end{itemize}

\setcounter{equation}{10}
\begin{figure*}[!t]
\begin{equation}
\label{eq:Sup_L}
\scriptsize
\bm{\mathcal{L}}= \left[
\begin{array}{ccc} 
 D^{(1)} \textbf{L}^{(1)}+D^{(1,2)} \textbf{I}+D^{(1,3)}\textbf{K}^{(1,3)} &   -D^{(1,2)}\textbf{I}   &   -D^{(1,3)}\textbf{W}^{(1,3)}\\
 -D^{(2,1)}\textbf{I} &   D^{(2)}\textbf{L}^{(2)}+D^{(2,1)}\textbf{I}+D^{(2,3)}\textbf{K}^{(2,3)}   &   -D^{(2,3)}\textbf{W}^{(2,3)}\\
 -D^{(3,1)}\textbf{W}^{(3,1)} &   -D^{(3,2)}\textbf{W}^{(3,2)}   &   D^{(3)}\textbf{L}^{(3)}+D^{(3,1)}\textbf{K}^{(3,1)}+D^{(3,2)}\textbf{K}^{(3,2)}
\end{array} 
\right].
\end{equation}
\hrulefill
\end{figure*}
\setcounter{equation}{2}
\normalsize 
\section{The Proposed Method}
\label{sec:ProposedMethod}

In this section we address the information diffusion across agents as a result of topic adoption and adaptation, as well as external topic additions.
To that end, we next consider a closed interconnected network with no additions. In Section  \ref{subsec:opensystem}, we consider an open interconnected network and account for an innovation injection. In Section \ref{subsec:EstimateL}, we define the estimation of the supra-Laplacian matrix using learning data. In Section \ref{subsec:Kalman}, we further refine the predicted state of the nodes using Kalman filtering. In Section \ref{subsec:eigen}, we analyze the effect of a weak inter-layer connection on the over-all diffusibility of the interconnected network.

\subsection{Closed system: Information diffusion in a heterogeneous network}
\label{subsec:closesystem}
In closed systems, all changes in the states of agents are a result of interaction of the agents in the network.
In a single layer diffusion process \cite{newman2010networks, escolano2012heat, gomez2013diffusion, sole2013spectral}, an agent state maybe formalized as:
\begin{equation}
\label{eq:change_rate}
\frac{d\textbf{x}_{A_i}}{dt}=D\sum_{j=1}^N \textbf{W}(i,j)(\textbf{x}_{A_j}-\textbf{x}_{A_i}),
\end{equation}
where $\frac{d\textbf{x}_{A_i}}{dt}$ is the $i^{th}$ agents' topic vector change over time.
$\textbf{W}(i,j)$ reflects  the connectivity status between agents $i$ and $j$, while  $D$ is the diffusion constant, or the fractional amount of information passing from $j$ to $i$ in a small time interval. By further simplification of Eqn. (\ref{eq:change_rate}),  we may write  $\frac{d \textbf{X}_{A}}{dt}$ as follows:
\begin{equation}
\label{eq:change_rate_genralize}
\frac{d \textbf{X}_{A}}{dt} + D \textbf{L} \textbf{X}_A = 0,
\end{equation}
where $\textbf{X}_{A}$ is an $N \times T$ matrix, where row $i$ is denoted by  the row vector $\textbf{x}_{A_i}^T$;
$\textbf{L}$ is an $N \times N$ graph Laplacian matrix  (i.e., $\textbf{L}=\textbf{K}-\textbf{W}$, $\textbf{K}$ being the diagonal matrix of the nodes' degree). 

Independent of the nodes in the network being agents or documents, we can state the following,
\begin{proposition}
We can generally write the supra-Laplacian matrix of an $M$ layer multiplex network with $N$ nodes in each layer as $\bm{\mathcal{L}} = \bm{\mathcal{L}}_L + \bm{\mathcal{L}}_I$. Where $\bm{\mathcal{L}}_L$ is the supra-Laplacian matrix of the intra-layer connectivity and $\bm{\mathcal{L}}_I$ is the supra-Laplacian matrix of the inter-layer connectivity. $\bm{\mathcal{L}}_L$ may be in turn, written as direct sum of the Laplacian matrices of the independent intra-layer connectivities:

\begin{equation}
\label{eq:L_Summation}
\bm{\mathcal{L}} = \bm{\mathcal{L}}_L + \bm{\mathcal{L}}_I,
\end{equation}
\footnotesize 
\begin{equation}
\label{eq:direct_sum_L}
\bm{\mathcal{L}}_L= \bigoplus_{\alpha=1}^M D^{(\alpha)} \textbf{L}^{(\alpha)} =\left[
\begin{array}{cccc} 
   D^{(1)}\textbf{L}^{(1)}&  &  & \\
     & D^{(2)}\textbf{L}^{(2)}&  & \\
  &  &\ddots&\\
  &   &   &D^{(M)}\textbf{L}^{(M)}
\end{array} 
\right].
\end{equation}
\normalsize
The inter-layer supra-Laplacian can be written as $\bm{\mathcal{L}}_I= \sum_{\alpha=1}^M (\bm{\mathcal{K}}^{\alpha}_I-\bm{\mathcal{W}}^{\alpha}_I)$, where the $\bm{\mathcal{K}}^{\alpha}_I$ is the diagonal inter-layer degree matrix of layer $\alpha$, showing the inter-layer degree of the nodes in layer $\alpha$ and the $\bm{\mathcal{W}}^{\alpha}_I$ is the inter-layer connectivity matrix of the nodes in layer $\alpha$ with the nodes in the other layers. The $\bm{\mathcal{K}}^{\alpha}_I$ and  $\bm{\mathcal{W}}^{\alpha}_I$ are formally defined in Eqns. (\ref{eq:K_I} and \ref{eq:W_I}) respectively:
\begin{equation}
\label{eq:K_I}
\bm{\mathcal{K}}^{\alpha}_I=\textbf{e}_{(\alpha, \alpha)} \otimes (\sum_{\beta=1 \beta \neq \alpha }^M D^{(\alpha, \beta)} \textbf{K}^{(\alpha, \beta)}),
\end{equation}
\begin{equation}
\label{eq:W_I}
\bm{\mathcal{W}}^{\alpha}_I= \sum_{\beta=1 \beta \neq \alpha }^M (\textbf{e}_{(\alpha, \beta)} \otimes (D^{(\alpha, \beta)} \textbf{W}^{(\alpha, \beta)})),
\end{equation}
where $\textbf{K}^{(\alpha,\beta)}$, is the diagonal matrix reflecting the degree of each node in the inter-layer
connectivity between layer $\alpha$ and layer $\beta$, $\textbf{W}^{(\alpha,\beta)}$ quantifies the inter-layer connectivity of the layer $\alpha$ nodes to the layer $\beta$ nodes and $\textbf{e}_{(\alpha, \beta)}$ is 
an all 0, $M \times M$, matrix with an only 1 element in $(\alpha, \beta)$.
\end{proposition}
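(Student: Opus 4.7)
The plan is to view the full interconnected network as a single graph on $MN$ supra-nodes indexed by pairs $(\alpha,i)$ with $\alpha\in\{1,\dots,M\}$ and $i\in\{1,\dots,N\}$, and then apply the standard definition $\bm{\mathcal{L}} = \bm{\mathcal{K}} - \bm{\mathcal{W}}$ where $\bm{\mathcal{K}}$ and $\bm{\mathcal{W}}$ are the supra-degree and supra-adjacency matrices. The key observation is that the edge set partitions disjointly into intra-layer edges (both endpoints in the same layer $\alpha$) and inter-layer edges (endpoints in different layers), so the supra-adjacency decomposes additively, $\bm{\mathcal{W}} = \bm{\mathcal{W}}_L + \bm{\mathcal{W}}_I$, and the supra-degree likewise decomposes, $\bm{\mathcal{K}} = \bm{\mathcal{K}}_L + \bm{\mathcal{K}}_I$, since each supra-node's total degree is the sum of its intra- and inter-layer degrees. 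Taking differences then gives $\bm{\mathcal{L}} = \bm{\mathcal{L}}_L + \bm{\mathcal{L}}_I$ with $\bm{\mathcal{L}}_L := \bm{\mathcal{K}}_L - \bm{\mathcal{W}}_L$ and $\bm{\mathcal{L}}_I := \bm{\mathcal{K}}_I - \bm{\mathcal{W}}_I$, which is the identity in Eqn. (\ref{eq:L_Summation}).

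Next I would establish the direct-sum form (\ref{eq:direct_sum_L}) for $\bm{\mathcal{L}}_L$. Because no intra-layer edge crosses layers, both $\bm{\mathcal{W}}_L$ and $\bm{\mathcal{K}}_L$ are block-diagonal; the $\alpha$-th diagonal block carries exactly the weighted adjacency and degree of $G^{(\alpha)}$ scaled by the layer's diffusion constant $D^{(\alpha)}$, so it equals $D^{(\alpha)}\textbf{L}^{(\alpha)} = D^{(\alpha)}(\textbf{K}^{(\alpha)}-\textbf{W}^{(\alpha)})$. Stacking these blocks yields the claimed direct sum $\bm{\mathcal{L}}_L = \bigoplus_{\alpha=1}^M D^{(\alpha)}\textbf{L}^{(\alpha)}$.

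For the inter-layer part, I would unpack the Kronecker constructions in (\ref{eq:K_I}) and (\ref{eq:W_I}). The selector matrix $\textbf{e}_{(\alpha,\beta)}$ has the property that, for any $N\times N$ block $X$, the matrix $\textbf{e}_{(\alpha,\beta)}\otimes X$ is an $MN\times MN$ matrix whose $(\alpha,\beta)$ block equals $X$ and whose other blocks vanish. Consequently, for each fixed $\alpha$, the term $\bm{\mathcal{W}}_I^{\alpha}$ deposits $D^{(\alpha,\beta)}\textbf{W}^{(\alpha,\beta)}$ into the $(\alpha,\beta)$ block for each $\beta\neq\alpha$, while $\bm{\mathcal{K}}_I^{\alpha}$ deposits the total inter-layer degree $\sum_{\beta\neq\alpha} D^{(\alpha,\beta)}\textbf{K}^{(\alpha,\beta)}$ of layer-$\alpha$ supra-nodes onto the $(\alpha,\alpha)$ diagonal block. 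Summing over $\alpha$ then fills every off-diagonal block $(\alpha,\beta)$ exactly once (only the index $\alpha$ contributes) and every diagonal block with the correct inter-layer degree, so $\sum_{\alpha}(\bm{\mathcal{K}}_I^{\alpha}-\bm{\mathcal{W}}_I^{\alpha}) = \bm{\mathcal{K}}_I - \bm{\mathcal{W}}_I = \bm{\mathcal{L}}_I$, as required.

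The main obstacle is purely bookkeeping: verifying that every inter-layer edge contributes to degrees at both endpoints without double counting, and that the off-diagonal blocks in the sum over $\alpha$ hit the correct positions with the correct signs. I would close by sanity-checking against the explicit $M=3$ formula in Eqn. (\ref{eq:Sup_L}): the procedure above predicts the $(1,1)$ block to be $D^{(1)}\textbf{L}^{(1)} + D^{(1,2)}\textbf{K}^{(1,2)} + D^{(1,3)}\textbf{K}^{(1,3)}$, which agrees with (\ref{eq:Sup_L}) once one specialises the one-to-one agent coupling $\textbf{K}^{(1,2)}=\textbf{I}$; similarly, the $(1,2)$ off-diagonal block becomes $-D^{(1,2)}\textbf{W}^{(1,2)} = -D^{(1,2)}\textbf{I}$, matching the stated form.
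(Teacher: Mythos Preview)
Your argument is correct, but it proceeds along a different axis than the paper's. You treat the multiplex as a single weighted graph on $MN$ supra-nodes and invoke the standard identity $\bm{\mathcal{L}}=\bm{\mathcal{K}}-\bm{\mathcal{W}}$, then split both factors according to the disjoint partition of edges into intra- and inter-layer classes; the Kronecker bookkeeping then reads off the block structure directly. The paper instead starts from the node-level diffusion dynamics, writing $\frac{d\textbf{x}_{A_i}^{(k)}}{dt}$ as a sum of three contributions (intra-layer, agent-to-agent inter-layer, agent-to-document inter-layer), and then collects these into the matrix equation $\dot{\textbf{X}}=-\bm{\mathcal{L}}\textbf{X}$; the decomposition $\bm{\mathcal{L}}=\bm{\mathcal{L}}_L+\bm{\mathcal{L}}_I$ emerges by grouping the terms of the dynamics rather than the edges of a graph. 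Your route is more elementary and makes the block structure of $\bm{\mathcal{L}}_I$ transparent without any appeal to diffusion; the paper's route has the complementary virtue of certifying that the matrix so constructed really is the generator of the interconnected diffusion process (i.e.\ that $\bm{\mathcal{L}}$ deserves the name supra-Laplacian in this dynamical context, not merely the graph-Laplacian of some auxiliary graph). Your sanity check against the explicit $M=3$ formula in Eqn.~(\ref{eq:Sup_L}) plays the same role as the paper's worked Example~1.
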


\begin{proof}
Considering the interconnected network case depicted  in Fig. (\ref{fig:hetro_ProbForm}), with $M_A$ and $M_I$ layers of connectivity among the agents and the documents respectively, we can write,
\begin{align}
\label{eq:rate_X}
        \frac{d\textbf{x}_{A_i}^{(k)}}{dt}&=D_A^{(k)} \sum_{j=1}^N \textbf{W}^{(k)}_A(i,j)(\textbf{x}_{A_j}^{(k)}-\textbf{x}_{A_i}^{(k)})\\
                     &+ \sum_{l=1}^{M_A} D_{A}^{(k,l)} (\textbf{x}_{A_i}^{(l)}-\textbf{x}_{A_i}^{(k)}) \notag\\
                     &+ \sum_{m=M_A+1}^{M_A+M_I} D_{A,I}^{(k,m)} \sum_{p=1}^S \textbf{W}^{(k,m)}_{A,I} (i,p) (\textbf{x}_{I_p}^{(m)}-\textbf{x}_{A_i}^{(k)}). \notag
\end{align}
The first term of Eqn. (\ref{eq:rate_X}) represents the intra-layer diffusion of the information inside layer $k$, 
the second term accounts for the diffusion of the information between different agent-layers for agent $i$,
while the third term describes the diffusion of information from different information-layers to agent $i$ at layer $k$. 
$\textbf{W}^{(k)}_A$ depicts the connectivity of agents within layer $k$, while $\textbf{W}^{(k,m)}_{A,I}$ does that of agents in layer $k$ and documents in layer $m$. 
The $D^{(k)}_A$ is the intra-layer diffusion constant of  agents in layer $k$,
while $D^{(k,l)}_A$ is the inter-layer diffusion constant of  agents from layer $k$ to  agents in layer $l$, 
and $D^{(k,m)}_{A,I}$ is the inter-layer diffusion constant between agents in layer $k$ and  documents in layer $m$.

Further simplification of Eqn. (\ref{eq:rate_X}) yields  the following differential equation:
\begin{equation}
\frac{d\textbf{X}}{dt}= - \bm{\mathcal{L}} \textbf{X},
\label{eq:diff_X}
\end{equation}
where $\textbf{X}$ is an $P \times T$ matrix. In case of an M layer multilayer network, rewriting the $\bm{\mathcal{L}}$ matrix will lead us to Eqn. (\ref{eq:L_Summation}).
\end{proof}

Following is an example which further clears steps of the proof.

\begin{example}
 In a three-layer interconnected network similar to Fig. (\ref{fig:hetro_layer}), $\textbf{X}$ is a $P \times T$ matrix ($P=2N+S$) which represents the states of  agents in the top two layers ($\textbf{X}^{(1)}$ and $\textbf{X}^{(2)}$),  and of the topic vectors of the documents in the third layer ($\textbf{X}^{(3)}$).
Following the terminology in \cite{gomez2013diffusion}, we refer to  $\bm{\mathcal{L}}$ as a supra-Laplacian matrix, by its ability to capture the diffusion in  inter-connected network system.
Assuming undirected graphs in each layer, and symmetric diffusion constants ($D^{(k,m)}=D^{(m,k)}$), we can write for easier explanation, the supra-Laplacian matrix in case of a three layer-network (Fig. (\ref{fig:hetro_ProbForm})), with two agent-layers and one information-layer as Eqn. (\ref{eq:Sup_L}). For this specific a 3 layer inter-connected network, we simplify our notation by dropping the subscripts (as in Eqn. (\ref{eq:Sup_L}), specifying the nature of a network layer) in favor of superscripts. \footnote{Layers 1-2 are the agent-layers, while layer 3 is the information-layer.}
This hence makes matrix  $\textbf{L}^m$ represent the Laplacian of the intra-layer connectivity matrix of  layer $m$, $\textbf{I}$ the identity matrix,  while $\textbf{K}^m$ is a diagonal matrix of node degree of layer $m$, and $\textbf{K}^{m,n}$ is a diagonal matrix reflecting the degree of each node in the inter-layer connectivity between layer $m$ and layer $n$. This hence yields Eqn. (\ref{eq:Sup_L}),
which the first elements in the diagonal entries, form $\bm{\mathcal{L}}_L$ (Eqn. (\ref{eq:direct_sum_L})), the other two elements in the diagonal entries form $\bm{\mathcal{K}}_I^{\alpha}$s (Eqn. (\ref{eq:K_I})), and the rest (off-diagonal entries), form $\bm{\mathcal{W}}_I^{\alpha}$s (Eqn. (\ref{eq:W_I})).
\setcounter{equation}{11}

\hfill$\square$
\end{example}

\subsection{Open System Diffusion: Impact  of External Effects}
\label{subsec:opensystem}

Much of the existing work in information diffusion models have a limited scope (of agents, documents, parameters) when predicting the future state of the nodes. More specifically, agent states may be varied by external sources which are not captured in the network, or by some agent actions which may even to some extent, conflict with the model prediction. to address this additional auxiliary input, we propose open system model as follows:


First, consider a single layer network (agent-layer) for simplicity, we can then model the rate of change in the state of node $i$ as follows,
\begin{equation}
\label{eq:open1}
d\textbf{x}_{A_i}=D\sum_{j=1}^N \textbf{W}(i,j)(\textbf{x}_{A_j}-\textbf{x}_{A_i}) dt + \bm{\sigma}_i \; d\bm{B}(t),
\end{equation}
where  $\textbf{W}(i,j)$ quantifies the connectivity between agent $i$ and $j$, and $D$ is the diffusion constant reflecting the infinitesimal amount of information passing from $j$ to $i$ in a small interval of time, and  
$\textbf{B}(t)$ is a $T \times T$ matrix, whose columns are $T$-dimensional vectors with components as independent standard Brownian motions of variances $\bm{\sigma}_i$.
Inspired by the Ornstein-Uhlenbeck (O.U.) process \cite{Ornstein1930}, Eqn. (\ref{eq:open1}) describes the velocity of the topical-state of the nodes as a Brownian motion in presence of friction. In other words, to describe the uncertainty due to external effects, we proceed to view the whole system as a massive Brownian particle. The drift term (first term in right-hand side of  Eqn. (\ref{eq:open1})), however,  moves the velocity from a martingale state of  $\bm{\sigma}_i d\textbf{B}(t)$ towards a consensus (captured by the drift term).
In matrix form,  Eqn. (\ref{eq:open1}) may be written as,
\begin{equation}
\label{eq:open2}
d\textbf{X}_{A}(t)=-D\textbf{L}\textbf{X}_{A}(t)dt+ \bm{\Sigma}_A d\textbf{B}(t),
\end{equation}
where $\bm{\Sigma}_A$  is an $N \times T$ matrix, and each row shows the $\bm{\sigma}_i$ vector in Eqn. (\ref{eq:open1}) for agent $i$. Fig. (\ref{fig:cons}) shows numeric of examples of Eqn. (\ref{eq:open2}) with $N=5$ and $T=1$. The $\bm{\Sigma}_A$ matrix adjusts the effect of the consensus and the Brownian motion in Eqn. (\ref{eq:open2}). In this figure, we have used the term $\frac{||\bm{\Sigma}_A||_F}{||\textbf{X}_0||_F}$ (with $\textbf{X}_0=\textbf{X}_{A}(0)$) to compare the effect of $\bm{\Sigma}_A$ magnitude matrix on the states of the agents. As the value of $\frac{||\bm{\Sigma}_A||_F}{||\textbf{X}_0||_F}$ increases the second term on the right-hand side of Eqn. (\ref{eq:open2}) will have more influence on determining the state of the nodes and consequently, the state of the agents will be changed by higher uncertainty.
\begin{figure}
\centering
	\begin{subfigure}[t]{0.2\textwidth}
		\centering
		\includegraphics[scale=0.2]
		{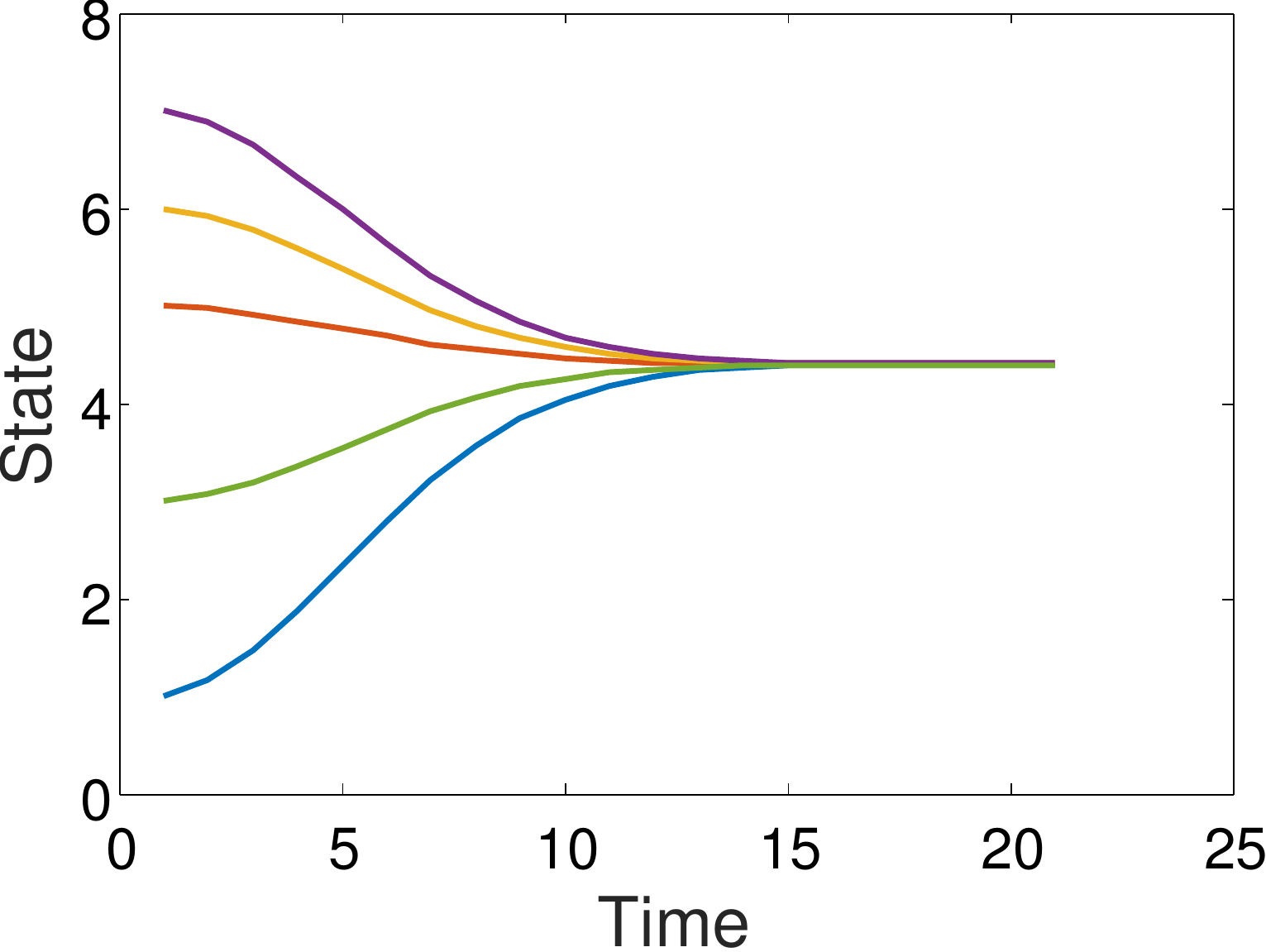}
		\caption{$\frac{||\bm{\Sigma}_A||_F}{||\textbf{X}_0||_F}=0$}
		\label{fig:cons_0}
	\end{subfigure}
	\begin{subfigure}[t]{0.2\textwidth}
		\centering
		\includegraphics[scale=0.2]
		{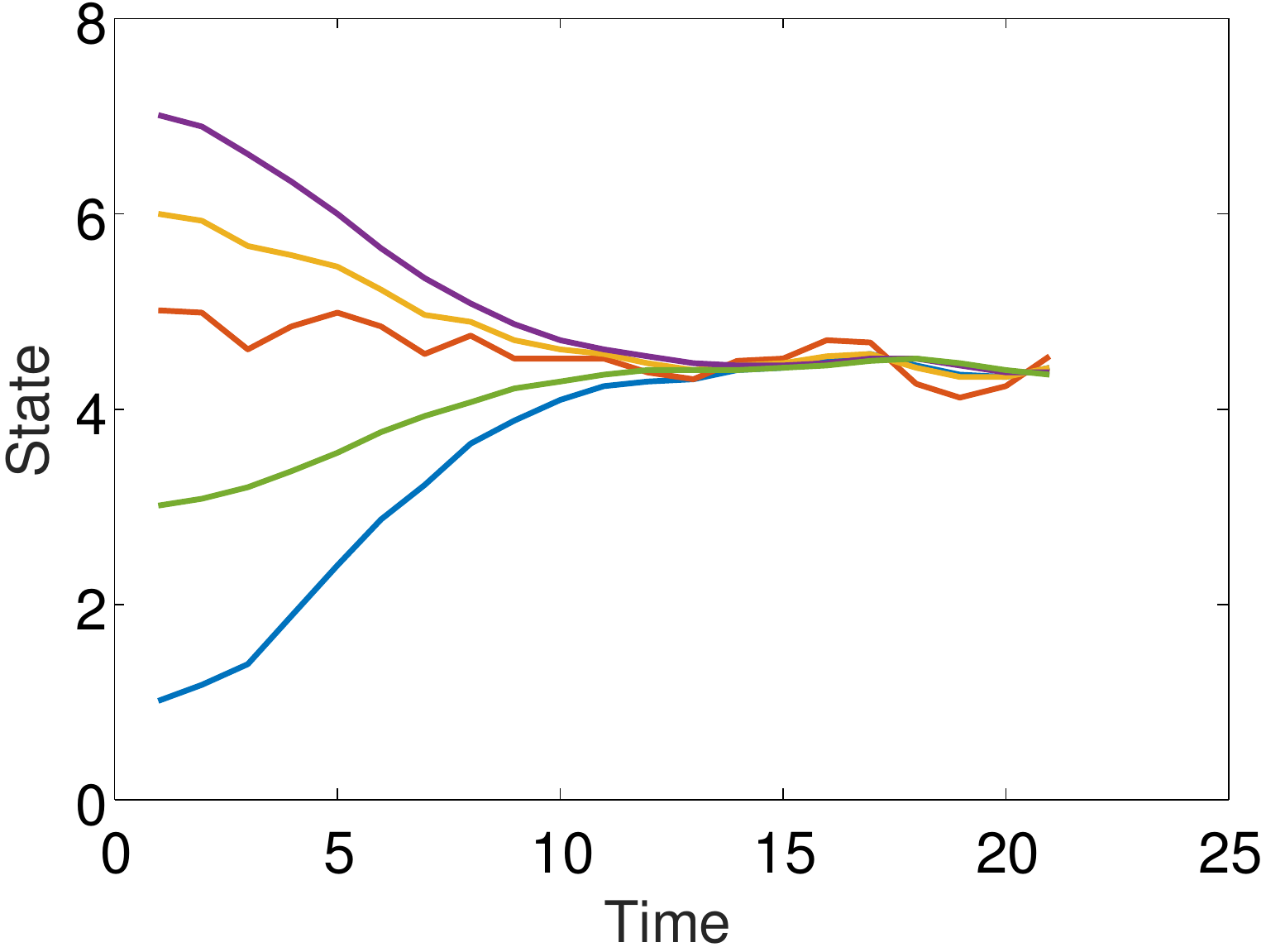}
		\caption{$\frac{||\bm{\Sigma}_A||_F}{||\textbf{X}_0||_F}=0.24$}
		\label{fig:cons_2}
	\end{subfigure}\\ \vspace{2mm}
	\begin{subfigure}[t]{0.2\textwidth}
    	\centering
		\includegraphics[scale=0.2]
		{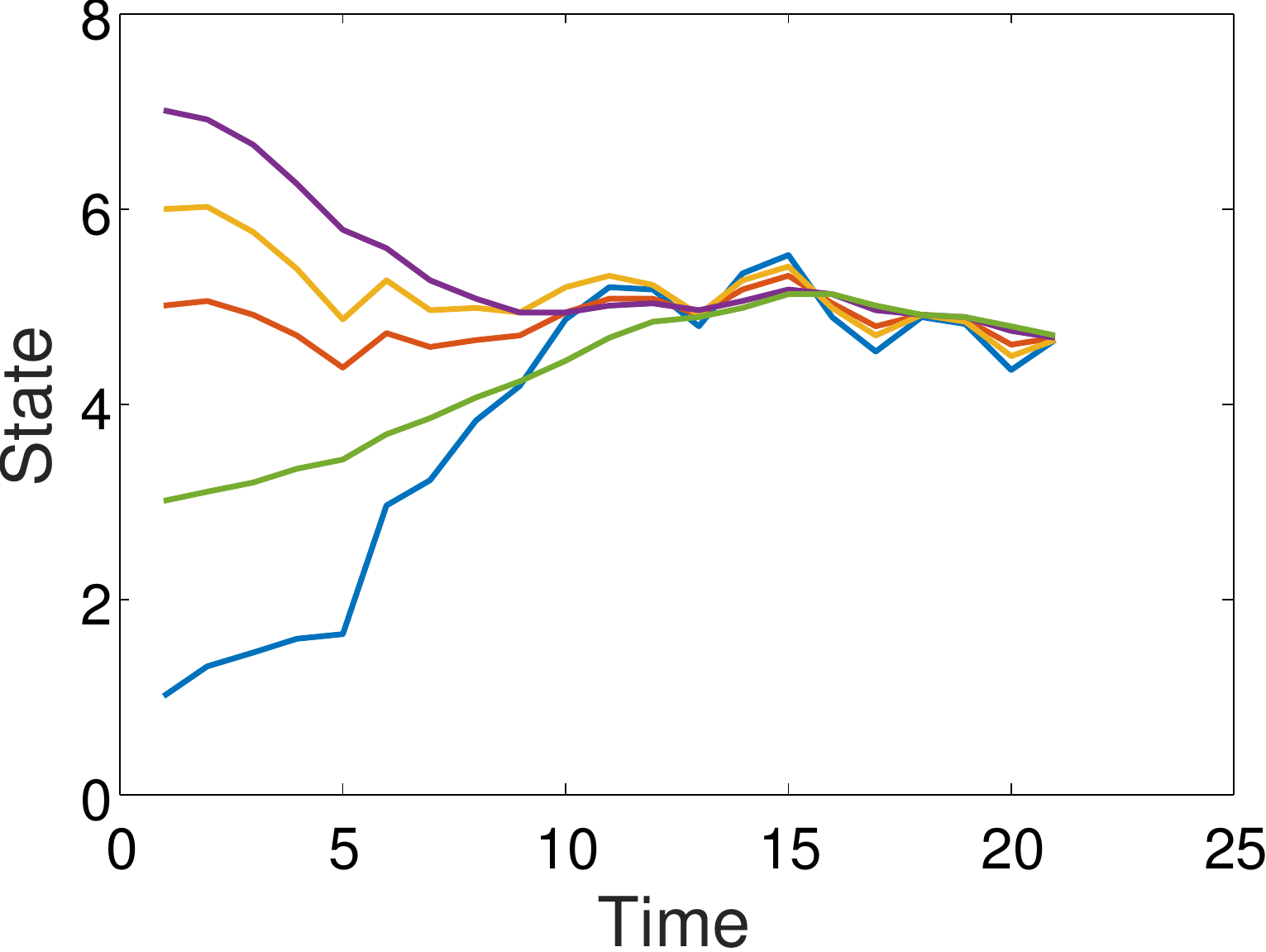}
		\caption{$\frac{||\bm{\Sigma}_A||_F}{||\textbf{X}_0||_F}=0.42$}
		\label{fig:cons_5}
	\end{subfigure}
	\begin{subfigure}[t]{0.2\textwidth}
    	\centering
		\includegraphics[scale=0.2]
		{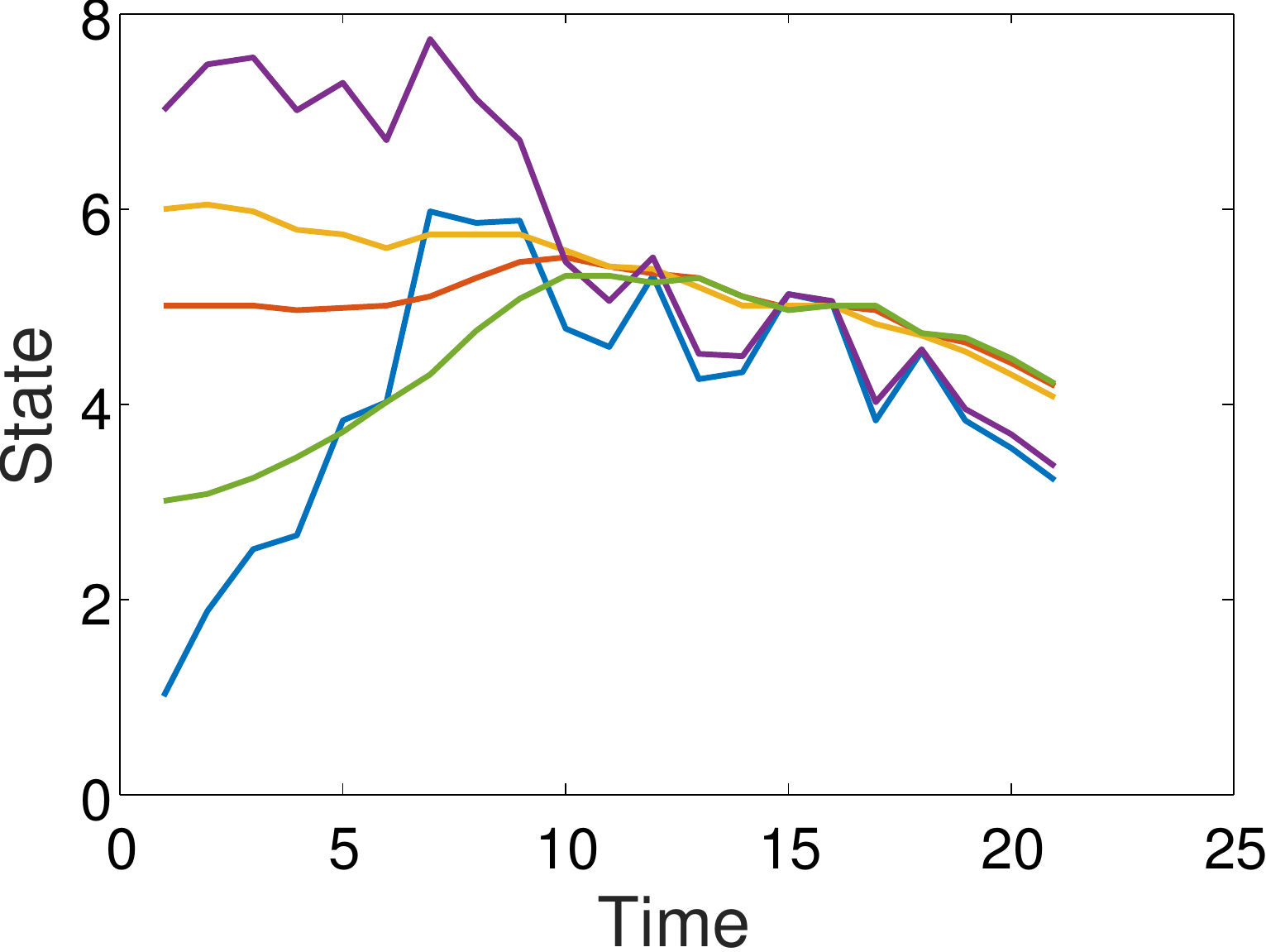}
		\caption{$\frac{||\bm{\Sigma}_A||_F}{||\textbf{X}_0||_F}=0.67$}
		\label{fig:cons_8}
	\end{subfigure}
	\caption{The effect of different magnitude of $\bm{\Sigma}_A$ matrix on the state of agents.}
	\label{fig:cons}
\end{figure}

To extend to an interconnected setting, we can state the following statement: Given the states of nodes at time $t_0$ we can predict the states at time $t_1$ , $t_1 > t_0$ as follows:
\begin{equation}
\label{eq:open4}
\widehat{\textbf{X}}(t_1)=e^{-\bm{\mathcal{L}}(t_1-t_0)} \textbf{X}(t_0)+ \int_{t_0}^{t_1} e^{\bm{\mathcal{L}}(s-t_1+t_0)} \bm{\Sigma} d\textbf{B}(s),
\end{equation}
where $e^{-\bm{\mathcal{L}}(s-t_1 + t_0)}$ is a matrix exponential and itself is a  $P \times P$ matrix and $\bm{\Sigma}$ is $P \times T$ matrix.
To this end, we rewrite Eqn. (\ref{eq:open2}) for interconnected networks as,
\begin{equation}
\label{eq:open3}
d\textbf{X}(t)=-\bm{\mathcal{L}}\textbf{X}(t)dt+\bm{\Sigma} d\textbf{B}(t).
\end{equation}
The first term on the right-hand side of  Eqn. (\ref{eq:open3}) describes the network-level diffusion taking place among the nodes (mean reverting term), while the second depicts the global diffusion process, affecting nodes regardless of their interactions.

There are multiple ways to calculate the term $\int_{t_0}^{t_1} e^{\bm{\mathcal{L}}\tau}\; d\tau$,
one may use the basic definition of a matrix exponential, and calculate the numerical value of the integral, or instead use the Jordan form as $\bm{\mathcal{L}}=\bm{\mathcal{M}} \bm{J} \bm{\mathcal{M}}^{-1}$ and $e^{\bm{\mathcal{L}}\tau}=\bm{\mathcal{M}} e^{\bm{J}\tau} \bm{\mathcal{M}}^{-1}$.

Our proposed learning procedure will evaluate the diffusion constants in the supra-Laplacian matrix $\bm{\mathcal{L}}$ as well as the $\bm{\Sigma}$ matrix. To that end, we proceed to minimize the Frobenius norm of the difference between $\textbf{X}$ and it's predict $\widehat{\textbf{X}}$, 
resulting from,
\begin{equation}
\label{eq:open5}
\begin{split}
\argmin_{\bm{\Sigma},\; D_1,..} g=||\bm{X}(t_1)-\widehat{\bm{X}}(t_1)||_F.\\
\end{split} 
\end{equation}
Solving this optimization problem helps us decompose the predicted matrix into two main components on the right-hand side of Eqn. (\ref{eq:open4}), the first term representing the interactions in the network, while the second quantifying the uncertainty which results from auxiliary inputs into the system.

\subsection{Diffusion Network Estimation (Learning the Supra-Laplacian Matrix)}
\label{subsec:EstimateL}
The supra-Laplacian matrix $\bm{\mathcal{L}}$ which we use in Eqn. (\ref{eq:open3}) for state prediction, is a result of the network connectivity (refer to Eqn. (\ref{eq:Sup_L})). In practice, hidden connections are pervasive, introducing uncertainty in the prediction, which are causing the information diffusion to require more than the predefined, explicit connections from the network. To that end, consider observations of $\textbf{X}(t)$ over $t\in[0,t_{1}]$, denote $\overline{\textbf{x}}(t):=vec(\textbf{X}(t))$, the vectorization of $\textbf{X}(t)$ to obtain a vector differential system in order to learn the supra-Laplacian matrix $\bm{\mathcal{L}}$ of Eqn. (\ref{eq:open3}):
\[
	\dot{\overline{\textbf{x}}}(t)=\bm{\Lambda}\overline{\textbf{x}}(t)+ \overline{\textbf{w}}(t)\;,\,\,0\leq t\leq t_{1},
\]
and we have $\bm{\Lambda}=\textbf{I}_{T}\otimes (-\bm{\mathcal{L}})$, the Kronecker product of $T$-by-$T$ identity matrix with $(-\bm{\mathcal{L}})$ and $\overline{\textbf{w}}(t)$ is the vectorization of $\textbf{w}(t)= \bm{\Sigma} \frac{d\textbf{B}(t)}{dt}$. 

We consider the simple cost function $J=\frac12\bm{\epsilon}^{T}\bm{\epsilon}$, where $\bm{\epsilon}=\overline{\textbf{x}}-\hat{\overline{\textbf{x}}}$, and hence for the estimation $\hat{\bm{\Lambda}}$ we have $\dot{\hat{\bm{\Lambda}}}=\gamma (\overline{\textbf{x}}-\hat{\overline{\textbf{x}}}) \overline{\textbf{x}}^{T}$ (derivative of $J$ with respect to $\bar{x}$), where the estimation $\hat{\overline{\textbf{x}}}(t+1)=e^{\hat{\bm{\Lambda}}} \overline{\textbf{x}}(t)$, and $\gamma>0$ is chosen appropriately as the scaling gain. In optimization iterations, the estimated value of $\hat{\bm{\Lambda}}$ at $i^{th}$ iteration is as follows:
\[
\hat{\bm{\Lambda}}_i=\hat{\bm{\Lambda}}_{i-1}+\dot{\hat{\bm{\Lambda}}}_{i-1}.
\]
We use $\hat{\bm{\Lambda}}_{0}=\textbf{I}_{T}\otimes (-\bm{\mathcal{L}})$, the graph Laplacian of the explicit following-follower network (as initialization). The learned $\bm{\Lambda}$ my however, not be exactly structured as $\textbf{I}_{T}\otimes (-\bm{\mathcal{L}})$, due to dependence of topics in the state space, as well as the nonlinearity and non-homogeneity of the diffusion. The resulting error $\bm{\epsilon}$ shall be considered for the estimation of the noise in the Kalman-Bucy filter as discussed next.

\subsection{A Refined Prediction: Kalman-Bucy Filtering}
\label{subsec:Kalman}
In prediction applications, the actual states of some of the nodes are sometimes known, and we want to predict those of all remaining nodes. An example of this may be seen in social networks, where state of the hub nodes, such as famous people or users with less restrictive privacy policies are known to the public, and one is interested in predicting the state of other less accessible users. 
Having partial knowledge of the states of a fraction of the nodes in the network, changes the state prediction problem to a Kalman predictor problem, and helps to refine the predicted states using a Kalman filter.
We propose a Kalman-Bucy filter as the optimal linear predictor for our system, and we write the observation equation as $\overline{\textbf{y}}(t)=(\textbf{I}_{T}\otimes \textbf{H})\overline{\textbf{x}}(t)+\overline{\textbf{v}}(t)$, with $\textbf{H}$ as a diagonal indicator matrix with 1 in all the observed entries, and 0 in all other entries.
$\bm{\Lambda}$ having been learned (see above Section), Kalamn-Bucy equations maybe written as:
\begin{gather}
\label{eq:kalman-1}
\notag
\dot{\overline{\textbf{x}}}(t)=\hat{\bm{\Lambda}}\overline{\textbf{x}}(t)+ \overline{\textbf{w}}(t), \\ \notag
\label{eq:kalman0}
\overline{\textbf{y}}(t)=\bm{\mathcal{H}} \overline{\textbf{x}}(t)+\overline{\textbf{v}}(t),
\end{gather}
where $\bm{\mathcal{H}} = \textbf{I}_{T}\otimes \textbf{H}$ and the noises $\overline{\textbf{w}}(t)$ and $\overline{\textbf{v}}(t)$ are zero-mean white (temporally) processes, i.e, $E(\overline{\textbf{w}}(t){\overline{\textbf{w}}(s)}^{T}) = \textbf{Q}_{t}\delta(t-s)$, $E(\overline{\textbf{v}}(t){\overline{\textbf{v}}(s)}^{T}) = \textbf{R}_{t}\delta(t-s)$, $E(\overline{\textbf{w}}(t){\overline{\textbf{v}}(s)}^{T})=0$. 
By considering small time intervals on discretization of the linear continues time system $(\delta_t=1)$, one can write the state equation as $\bar{\textbf{x}}(t+1)=\hat{\textbf{F}} \bar{\textbf{x}}(t)+ \bar{\textbf{w}}(t)$, where $\hat{\textbf{F}}= \textbf{I} + \hat{\bm{\Lambda}}$:
\begin{gather}
\label{eq:discret}
\notag
\overline{\textbf{x}}(t+\delta_t)\simeq\overline{\textbf{x}}(t)+ \delta_t \,\dot{\overline{\textbf{x}}}(t), \\ \notag
\overline{\textbf{x}}(t+\delta_t)\simeq\overline{\textbf{x}}(t)+ \delta_t \, (\hat{\bm{\Lambda}}\overline{\textbf{x}}(t)+ \overline{\textbf{w}}(t)), \\ \notag
\overline{\textbf{x}}(t+\delta_t)\simeq(\textbf{I}+ \delta_t \, \hat{\bm{\Lambda}})\overline{\textbf{x}}(t)+ \delta_t \, \overline{\textbf{w}}(t). \notag
\end{gather}
Discretizing the Kalman-Bucy equations, gives us following discrete time equations:
\begin{gather}
\label{eq:kalman1}
\bar{\textbf{x}}(t+1)=\hat{\textbf{F}} \bar{\textbf{x}}(t)+ \bar{\textbf{w}}(t), \\ 
\label{eq:kalman2}
\overline{\textbf{y}}(t)=\bm{\mathcal{H}} \overline{\textbf{x}}(t)+\overline{\textbf{v}}(t).
\end{gather}

Having Eqns. (\ref{eq:kalman1} and  \ref{eq:kalman2}) as the state and observation equations respectively, we can predict and refine the predicted states of the nodes using Algorithm (\ref{alg}):
\begin{algorithm}[]
\caption{}\label{alg}
\textbf{Learning phase:}
\begin{algorithmic}[1]
\State $\overline{\textbf{x}}(t) \gets vec(\textbf{X}(t))$
\State $\hat{\bm{\Lambda}} \gets \textbf{I}_{T}\otimes (-\bm{\mathcal{L}})$ \Comment{Initial state.}
\Repeat :
\State $\hat{\overline{\textbf{x}}}(t+1) \gets e^{\hat{\bm{\Lambda}}} \overline{\textbf{x}}(t)$
\State $\dot{\hat{\bm{\Lambda}}} \gets \gamma (\overline{\textbf{x}}-\hat{\overline{\textbf{x}}}) \overline{\textbf{x}}^{T}$
\State $\hat{\bm{\Lambda}} \gets \hat{\bm{\Lambda}} + \dot{\hat{\bm{\Lambda}}} $
\Until{$||\overline{\textbf{x}}-\hat{\overline{\textbf{x}}}||_2 < \eta.$} \Comment{Convergence criteria.}
\end{algorithmic}
\textbf{Kalman filter prediction on test data:}
\begin{algorithmic}[1]
\State $\textbf{R}_{e,t} \gets \textbf{R}_t + \bm{\mathcal{H}} \bm{\Pi}_{t|t-1}\bm{\mathcal{H}}^T$  \Comment{Updating.}
\State $\hat{\bar{\textbf{x}}}_{t|t} \gets \hat{\bar{\textbf{x}}}_{t|t-1} + \bm{\Pi}_{t|t-1}\bm{\mathcal{H}}^T \textbf{R}^{-1}_{e,t} [\bar{\textbf{y}}_t - \bm{\mathcal{H}} \hat{\bar{\textbf{x}}}_{t|t-1}]$
\State $\bm{\Pi}_{t|t} \gets \bm{\Pi}_{t|t-1} - \bm{\Pi}_{t|t-1}\bm{\mathcal{H}}^T \textbf{R}^{-1}_{e,t} \bm{\mathcal{H}} \bm{\Pi}_{t|t-1}$
\State $\hat{\textbf{F}} \gets \textbf{I} + \hat{\bm{\Lambda}}$
\State $\hat{\bar{\textbf{x}}}_{t+1|t} \gets \hat{\textbf{F}} \hat{\bar{\textbf{x}}}_{t|t}$ \Comment{Predicting.}
\State $\bm{\Pi}_{t+1|t} = \hat{\textbf{F}}\bm{\Pi}_{t|t}\hat{\textbf{F}}^T + \textbf{Q}_t$
\end{algorithmic}
\end{algorithm}

The "learning phase" of the Algorithm (\ref{alg}) is estimating the supra-Laplacian matrix $\hat{\bm{\Lambda}}$ (see above Section).
The second phase of the algorithm, is refining the estimated state of the nodes. Note that $\textbf{R}_{t}$ is the covariance of the observational error, and  $\hat{\bar{\textbf{x}}}_{t_2|t_1}$ denotes the linear prediction of $\bar{\textbf{x}}$ at time $t_2$ given observations up to and including time $t_1$.
The filter equation of a Kalman-Bucy filter\cite{kalman1960}, lines 1-3 of the algorithm, is given by:
\begin{gather}
\label{eq:kalman3}
\dot{\hat{\overline{\textbf{x}}}}=\hat{\bm{\Lambda}}\hat{\overline{\textbf{x}}}+\textbf{G}_{t}(\overline{\textbf{y}}(t)-\bm{\mathcal{H}}\hat{\overline{\textbf{x}}}(t)),\\
\label{eq:kalman4}
\textbf{G}_{t}=\bm{\Pi}_{t}\bm{\mathcal{H}}^T\textbf{R}_{t}^{-1},
\end{gather}
while the $\textbf{G}_{t}$ is the Kalman gain, and the state covariance $\bm{\Pi}_{t}$ satisfies the \emph{Riccati} equation:
\begin{equation}
\label{eq:kalman5}
\dot{\bm{\Pi}}_{t}=\hat{\bm{\Lambda}}\bm{\Pi}_{t}+\bm{\Pi}_{t}\hat{\bm{\Lambda}}^{T}+\textbf{Q}_{t}-\textbf{G}_{t}\textbf{R}_{t}\textbf{G}_{t}^{T}.
\end{equation}
For simplicity, we further assumed that the errors in the state prediction and observation are Gaussian processes. 

The designed algorithm shows the discrete time, state update of the Kalman predictor. The estimated states of the available nodes,  $\bm{\mathcal{H}}\hat{\overline{\textbf{x}}}(t)$, are compared with the state of the available nodes, $\overline{\textbf{y}}(t)$, as measurements observed over time, to evaluate the extent of statistical noise and other inaccuracies in predicting phase. The Kalman gain $\textbf{G}_{t}$, is tuned to assign accurate gain on the measurement (state of the available nodes) or follow the prediction model (estimating the state of the unknown nodes) more closely.
More simply, the algorithm recursively learns from the error appeared in the estimation of the states of the available nodes and-- by taking into account of the covariance of the measured error-- refines the estimation of the states of the less accessible nodes.

\subsection{Inter-layer Connectivity: Structural Robustness of an Interconnected Network}
\label{subsec:eigen} 
In an interconnected network setting, the inter-layer links play a crucial role in speed of diffusion between layers. Strong inter-layer links will cause a faster information diffusion among the layers and while, weak inter-layer connections will yield a set of independent layers \cite{sole2013spectral}. In this section we use perturbation theory \cite{_modern_2010} to study the effect of weak inter-layer linkage on the connectivity of the over-all interconnected network. The second smallest eigenvalue of a Laplacian matrix (algebraic connectivity) helps us to uncover how close is the interconnected network is to break into multiple connected components \cite{sole2013spectral, chung1997spectral, mohar1991laplacian}. To this end, we claim following proposition,

\begin{proposition}
If an interconnected network has connected intra-layer networks but weak inter-layer links, the second smallest eigenvalue of the supra-Laplacian matrix of that network is equal to,
\begin{equation}
\label{eq:prop2}
\epsilon \textbf{u}_n^T \bm{\mathcal{L}}_I \textbf{u}_n,
\end{equation}
where $\textbf{u}_n$ is an eigenvector of the intra-layer supra-Laplacian matrix $\bm{\mathcal{L}}_L$. 
\end{proposition}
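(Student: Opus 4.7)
The plan is to derive the claim as a consequence of degenerate first-order perturbation theory applied to the supra-Laplacian. Introduce a scalar coupling strength $\epsilon$ so that the hypothesis ``weak inter-layer links'' is encoded by writing $\bm{\mathcal{L}} = \bm{\mathcal{L}}_L + \epsilon \bm{\mathcal{L}}_I$ with $\epsilon$ small compared with the intra-layer spectral gap. Then $\bm{\mathcal{L}}_L$ is the unperturbed operator and $\epsilon\bm{\mathcal{L}}_I$ is the perturbation, and the aim is to expand the low-lying eigenvalues of $\bm{\mathcal{L}}$ in powers of $\epsilon$.

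First I would pin down the unperturbed spectrum. By Proposition 1, $\bm{\mathcal{L}}_L = \bigoplus_{\alpha=1}^{M} D^{(\alpha)} \textbf{L}^{(\alpha)}$ is block diagonal, so its spectrum is the disjoint union of the intra-layer spectra. Because each intra-layer graph $G^{(\alpha)}$ is connected, each block contributes a simple $0$ eigenvalue whose eigenvector $\mathbf{1}_\alpha$ is proportional to the indicator of layer $\alpha$'s nodes (zero-padded on the other layers), with all remaining eigenvalues strictly positive. Hence $\ker(\bm{\mathcal{L}}_L)$ is $M$-dimensional with orthogonal basis $\{\mathbf{1}_\alpha\}_{\alpha=1}^{M}$ and is separated from the rest of the spectrum by the gap $\lambda^{\star} = \min_\alpha \lambda_2\!\left(D^{(\alpha)} \textbf{L}^{(\alpha)}\right) > 0$.

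Since the unperturbed $0$ eigenvalue is $M$-fold degenerate, the naive non-degenerate formula does not apply; instead I would form the $M\times M$ reduced matrix $A_{\alpha\beta} = \mathbf{1}_\alpha^T \bm{\mathcal{L}}_I \mathbf{1}_\beta / (\|\mathbf{1}_\alpha\|\,\|\mathbf{1}_\beta\|)$ and diagonalize it inside $\ker(\bm{\mathcal{L}}_L)$. Its eigenvectors, when lifted back into $\R^{P}$ via the basis $\{\mathbf{1}_\alpha\}$, are the ``correct'' zeroth-order modes $\textbf{u}_n\in\ker(\bm{\mathcal{L}}_L)$, and its eigenvalues are the first-order shifts. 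By Kato's analytic perturbation theory, the $M$ eigenvalues of $\bm{\mathcal{L}}$ bifurcating from $0$ admit expansions $\epsilon\,\textbf{u}_n^T \bm{\mathcal{L}}_I \textbf{u}_n + O(\epsilon^2)$, which remain strictly below $\lambda^{\star}$ for $\epsilon$ small enough, so these are indeed the $M$ smallest eigenvalues of $\bm{\mathcal{L}}$.

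Finally I would single out the second smallest. Since $\bm{\mathcal{L}}$ is itself the Laplacian of the full interconnected graph, $\mathbf{1}_P = \sum_\alpha \mathbf{1}_\alpha \in \ker(\bm{\mathcal{L}})$, which forces one eigenvalue of $A$ to vanish exactly (not merely to first order) and accounts for the smallest eigenvalue of $\bm{\mathcal{L}}$. The second smallest eigenvalue of $\bm{\mathcal{L}}$ is therefore, to first order, the smallest nonzero eigenvalue of $A$, namely $\epsilon\,\textbf{u}_n^T \bm{\mathcal{L}}_I \textbf{u}_n$ for the associated eigenvector $\textbf{u}_n$. The main obstacle is handling the degenerate case cleanly: one must justify analyticity of the bifurcation in $\epsilon$ so higher-order corrections can be dropped, and emphasize that $\textbf{u}_n$ is not an arbitrary element of $\ker(\bm{\mathcal{L}}_L)$ but the specific one that simultaneously diagonalizes $\bm{\mathcal{L}}_I$ restricted to that subspace. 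I would state this choice explicitly when writing up the proof to avoid the looseness present in the current proposition statement.
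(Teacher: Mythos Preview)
Your argument is correct and follows the same overall strategy as the paper: write $\bm{\mathcal{L}} = \bm{\mathcal{L}}_L + \epsilon\bm{\mathcal{L}}_I$, treat $\epsilon\bm{\mathcal{L}}_I$ as a perturbation, use the block-diagonal structure of $\bm{\mathcal{L}}_L$ to conclude it has an $M$-fold zero eigenvalue, and read off the first-order shift of the low-lying eigenvalues as a Rayleigh quotient of $\bm{\mathcal{L}}_I$.

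The one substantive difference is how the degeneracy is handled. The paper expands an eigenvector $\textbf{v}$ of $\bm{\mathcal{L}}$ in the eigenbasis $\{\textbf{u}_m\}$ of $\bm{\mathcal{L}}_L$, assumes $\textbf{v}=\textbf{u}_n+o(\epsilon)$ for a single index $n$, drops the cross terms, and arrives at $\lambda=\lambda_n+\epsilon\,\textbf{u}_n^T\bm{\mathcal{L}}_I\textbf{u}_n$; it then simply remarks that $\bm{\mathcal{L}}_L$ has $M$ zero eigenvalues. This is the non-degenerate Rayleigh--Schr\"odinger step and leaves unspecified \emph{which} $\textbf{u}_n$ in the $M$-dimensional kernel one should take. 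You instead invoke degenerate perturbation theory explicitly: you restrict $\bm{\mathcal{L}}_I$ to $\ker(\bm{\mathcal{L}}_L)=\mathrm{span}\{\mathbf{1}_\alpha\}$, diagonalize the resulting $M\times M$ matrix, and identify the second smallest eigenvalue of $\bm{\mathcal{L}}$ with the smallest nonzero eigenvalue of that reduced problem. Your route is more rigorous and, as you note, resolves the ambiguity in the proposition's statement about which $\textbf{u}_n$ is meant; the paper's route is shorter but glosses over exactly the point you flag.
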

\begin{proof}
Similarly to Eqn. (\ref{eq:L_Summation}), write the supra-Laplacian matrix as a supra-Laplacian matrix of the intra-layer connectivity $\bm{\mathcal{L}}_L$, and a supra-Laplacian matrix of a weak inter-layer $\epsilon \bm{\mathcal{L}}_I$:
\begin{equation}
\label{eq:Lperturbation}
\bm{\mathcal{L}} = \bm{\mathcal{L}}_L + \epsilon \bm{\mathcal{L}}_I,
\end{equation}
where $\epsilon$ is a small positive number. 

Denote $\textbf{v}$ and $\lambda$ by respectively an eigenvector and the associated eigenvalue of the supra-Laplacian matrix $\bm{\mathcal{L}}$, we have $\bm{\mathcal{L}} \textbf{v}= \lambda \textbf{v}$. Similarly for the intra-layer supra-Laplacian, i.e, $\bm{\mathcal{L}}_L \textbf{u}_n=\lambda_n \textbf{u}_n$, where $\{\textbf{u}_n\}$ is an orthonormal set of vectors such that: $\textbf{u}_n^T \textbf{u}_m=\delta_{nm}$. 

Since Laplacian matrices are symmetric and diagonalizable, an eigenvector of the supra-Laplacian matrix maybe written as a linear combination of the eigenvectors of the intra-layer supra-Laplacian ($\textbf{v}=\sum_m c_m \textbf{u}_m$). We therefore, have:
\begin{equation}
\label{eq:Lperturbation2}
\bm{\mathcal{L}} \sum_m c_m \textbf{u}_m = \lambda \sum_m c_m \textbf{u}_m.
\end{equation}
By right multiplying both hand-sides of the Eqn. (\ref{eq:Lperturbation2}) by $\textbf{u}_n^T$ and replacing the $\bm{\mathcal{L}}$ with Eqn. (\ref{eq:Lperturbation}) we will readily have: 
\begin{equation}
\label{eq:Lperturbation3}
c_n \lambda_n + \epsilon \sum_m c_m \textbf{u}_n^T \bm{\mathcal{L}}_I \textbf{u}_m = \lambda c_n.
\end{equation}
Since we know that $\textbf{v}= \textbf{u}_n + \textbf{o}(\epsilon)$, the eigenvector $\textbf{v}$ should be mainly in the same direction of one of the eigenvector's of the $\bm{\mathcal{L}}_L$ matrix with small perturbation. Therefore we can assume $c_m=o(\epsilon)$ and $c_n=o(1)$. In Eqn. (\ref{eq:Lperturbation3}), if we ignore the terms which have both $\epsilon$ and $c_m$ multipliers we will get: 
\begin{equation}
\label{eq:Lperturbation4}
\lambda = \lambda_n + \epsilon \textbf{u}_n^T \bm{\mathcal{L}}_I \textbf{u}_n.
\end{equation}

Since $\bm{\mathcal{L}}_L$ is the Laplacian matrix of $M$ independent intra-layers, it will have at least $M$ unconnected components and $M$ zero eigenvalues \cite{sole2013spectral}. From Eqn. (\ref{eq:Lperturbation4}), we therefore can infer that 
if an interconnected network has connected intra-layer networks but weak inter-layer links, the second smallest eigenvalue of the supra-Laplacian matrix of that network is equal to $\epsilon \textbf{u}_n^T \bm{\mathcal{L}}_I \textbf{u}_n$.

\end{proof}

 In section \ref{sec:Results} we demonstrate through an experiment the effect of weak inter-layer connectivity of our state predictions model.
\section{Experiments}
\label{sec:Results}
To evaluate and substantiate the theoretical interconnected-network model proposed above section, we conduct discrete-time simulations of our system in both "closed and open" system conditions. 
\subsection{Data Sets}
\label{subsec:Datasets}
Our experiments have been carried out using  the following data sets,
\begin{itemize}
 \item \textbf{Network of professors and publications $(N \sim 100)$:}
 We assume a three-layer network, with two agent-layers for the professors (the first and the second layers) and one information-layer for publications (the third layer).
In this data-set the agents are $79$ professors at North Carolina State University, and the documents are $1000$ abstracts of academic papers published by these professors 1990 to 2014.
The agents in the first and the second agent-layers are the same individuals with one-to-one connections between the agents. 
The first agent-layer is based on the number of  papers two professors have co-authored in the same venue. $\textbf{W}_{i,j}^{(1)}$ is the cumulative sum of  papers published by professor $i$ and $j$ in the same venue.
The second agent-layer reflects research group co-membership of two professors (with 8 different research groups considered).
\begin{equation}
\label{eq:res2}
\textbf{W}_{i,j}^{(2)}=  \begin{cases}
   1& \text{If professor $i$ and $j$ are in the same group,}\\
   0& \text{Otherwise.}
  \end{cases}
\end{equation}
The third layer (the information-layer) is based on the topical similarity of the produced documents, and is quantified by the inverse distance between the topical vectors $(T=10)$ of the documents. An $\epsilon$-neighborhood criterion has been applied to break the weaker links,
\begin{equation}
\label{eq:res3}
\textbf{W}_{i,j}^{(3)}=  \begin{cases}
   \frac{1}{||\textbf{x}^{(3)}_i-\textbf{x}^{(3)}_j||_2}& \text{If larger than $\epsilon$,}\\
   0& \text{Otherwise.}
  \end{cases}
\end{equation}
$T=10$ dimensional topic vectors of documents have been produced by performing the LDA  topic modeling algorithm \cite{Blei:2003:LDA} on the raw text documents.

 \item \textbf{Network of Twitter users and Hashtags $(N = 1000)$:}
A two-layer network of Twitter users, (first layer) with $(N = 1000)$ users  and the similarity network between eight different popular Hashtags used by the Twitter users in June 2009. The Hashtags are as follows: \#jobs, \#spymaster, \#neda, \#140mafia, \#tcot, \#musicmonday, \#Iranelection, \#iremember.
The agent-layer network is a directed graph, based on who is following whom on Twitter.
The information-layer network is the similarity network between the Hashtags, where  two Hashtags are considered similar if both have been mentioned in the same Tweet. We have used the Jaccard index \cite{Jaccard} to build the network between the Hashtags, and the $\epsilon$-neighborhood criterion has been applied to break the weaker links.
\begin{equation}
\label{eq:res3}
\textbf{W}_{m,n}^{(2)}=
\begin{cases}
   \frac{|\text{Tweets which have both m and n}|}{|\text{Tweets which have m or n}|} & \text{If greater than }\epsilon.\\
   0& \text{Otherwise.}
  \end{cases}
\end{equation}

 \item \textbf{Network of Twitter users and Hashtags $(N = 5000)$:}
An exactly similar setting as the previously mentioned data set, with $N = 5000$ users.
 
\end{itemize}
\subsection{Analysis of Results}
\label{subsec:AnalysisofResults}
Fig. (\ref{fig:res_NCreach}) shows  the results of our experiment on this data set (the network of professors and publications). The x-axis represents time, (and more specifically  the consecutive years from  Year 2000 until  Year 2014). In our experiment, our goal is to  predict the topical state of all the agents in each year by having the topical state of the agents in the previous year. More formally, we seek  $\widehat{\textbf{X}}^{(1)}(t)$ given $\widehat{\textbf{X}}^{(1)}(t-1)$. 
\begin{equation}
\label{eq:ER_measure}
Error\;measure(t)=\frac{||\widehat{\textbf{X}}^{(1)}(t)-\textbf{X}^{(1)}(t)||_F}{||\textbf{X}^{(1)}(t)||_F},
\end{equation}
where $\textbf{X}^{(1)}(t)$ is the ground truth matrix.
The data over years 1990-1999 are used to learn the diffusion constants, which those from years 2000-2014 for testing. 
The graph in blue is the prediction errors by just considering a single-layer co-authorship network of agents/professors. The graph in red reflects the prediction error by considering a three-layer interconnected network of the heterogeneous nodes (as described in Section \ref{subsec:Datasets}). The graph in black shows the resulting changes  in the topical states of the agents. 
\begin{figure}
\centering
\includegraphics[width=0.33\textwidth]{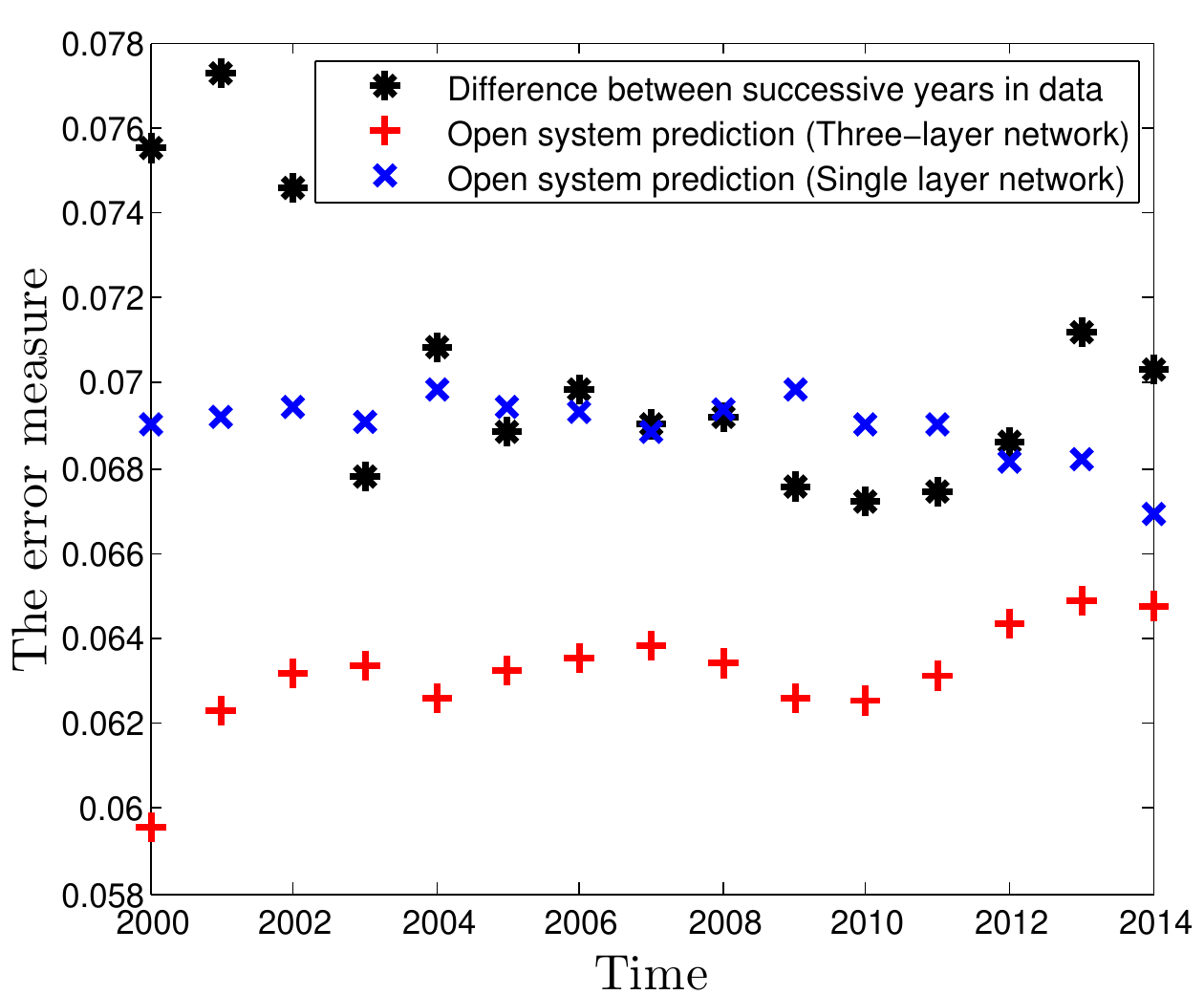}
\caption{Experiment for College Professor Network with 1000 publication documents.}
\label{fig:res_NCreach}
\end{figure}
This is hence a measure to reflect the difference in the yearly topical states of the agents, as in,
\begin{equation}
\label{eq:diff}
\frac{||\textbf{X}^{(1)}(t)-\textbf{X}^{(1)}(t-1)||_F}{||\textbf{X}^{(1)}(t)||_F}.
\end{equation}
Eqn. (\ref{eq:diff}) is an upper bound of our prediction error, and any reasonable prediction method should have a smaller error than the the one in Eqn. (\ref{eq:diff}). We may indeed just assume  no change in the agent states has occured to achieve the error in Eqn. (\ref{eq:diff}).

\subsubsection{Prediction using single layer network vs. heterogeneous network}
\label{subsec:result_singleLayervsMultilayer}
As may be seen in Fig. (\ref{fig:res_NCreach}), the prediction method based on a three-layer network achieves a lower error than the prediction based on a single-layer network. Note, the single-layer network does not help in predicting the topical states of the agents. The reason is that there are only 79 agents in this experiment and the co-authorship network between the agents is not particularly suited to predict the future state of the agents.
 The three-layer network, on the other hand, considers the similarity between the documents present in the network, and accounts for  more elaborate diffusion paths, thus enhancing the prediction phase.
 The average prediction improvement by the three-layer network over that of a   single-layer network is 8 percent.

\begin{figure}
\centering
\includegraphics[width=0.33\textwidth]{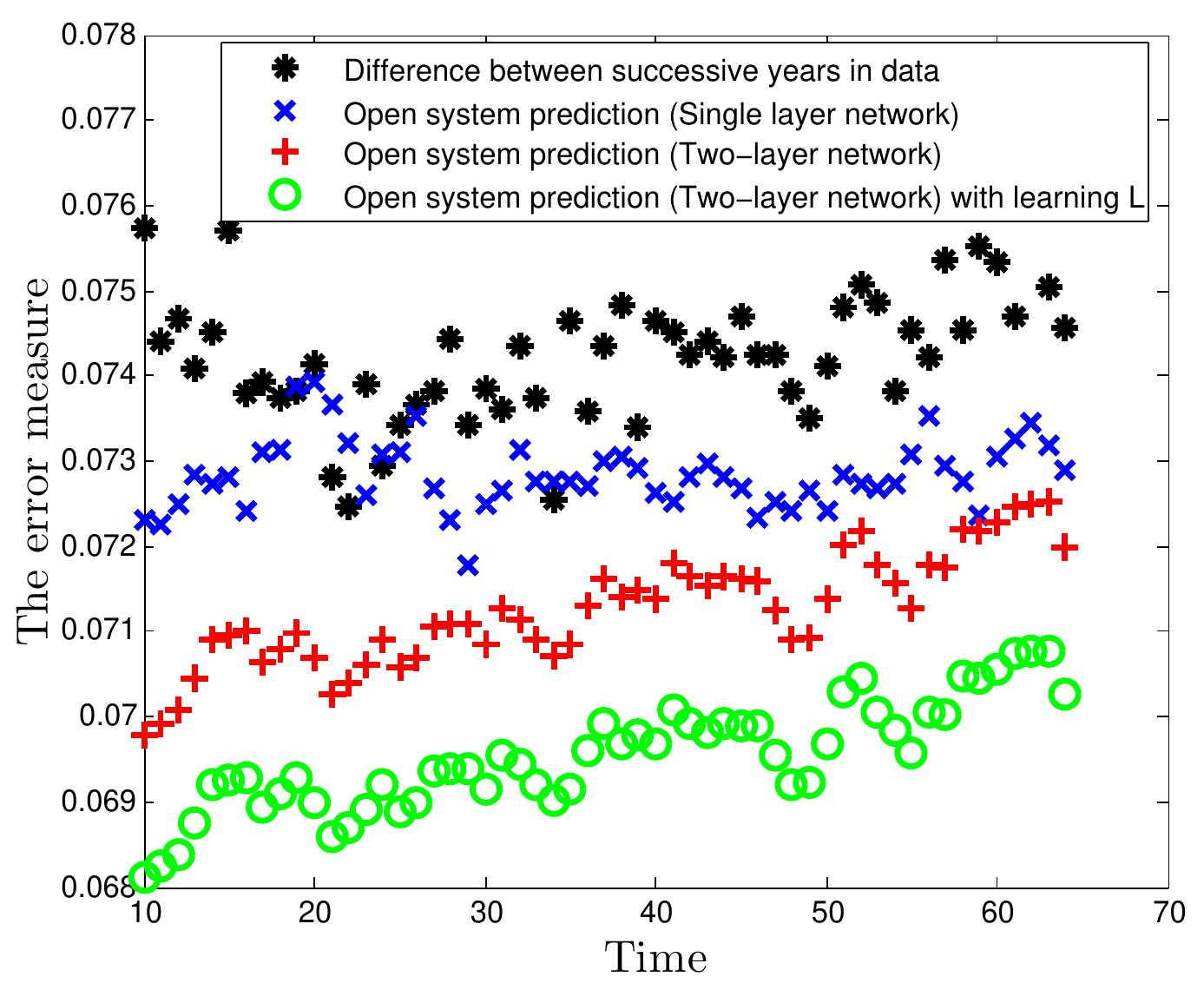}
\caption{Experiment for a  Twitter network with 1000 agents and 8 Hashtags.}
\label{fig:res_tweet1000}
\end{figure}
Figure \ref{fig:res_tweet1000} displays the results of  our experiment on the second data set. The time (x-axis) reflects intervals of consecutive six-hour periods. In this experiment, recall, our goal is  to predict the topical state of all agents at each time point when given  the topical state of the agents at previous time points. The error  metric is similar to that in  Eqn. (\ref{eq:ER_measure}). The graph in blue reflects the prediction error by just considering a single-layer Twitter follower/following network between users. The graph in red displays the prediction error using a two-layer interconnected network of heterogeneous nodes (as  described in Section \ref{subsec:Datasets}). The graph  in black represents the changes  in the topical states of users (Eqn. (\ref{eq:diff})). Finally, the graph in green shows the prediction error by first estimating the Laplacian matrix using the learning dataset and subsequently using it for the prediction phase.
As may be seen in the figure, the prediction based on the two-layer network displays improvement over the prediction based on the single-layer network. 
We have used the data from  time point 1 until  time point 9 for learning the diffusion constants, and have used the data from  Time-Point 10 to 64 for testing. 
The average prediction improvement using a two-layer network over a single-layer is 6 percent, while the prediction improvement  using a single layer network over successive differences of data  is  3 percent. 
We expect the improvement to further increase as the amount of data increases.
Predicting the Laplacian matrix helps us on decreasing the prediction error. The average prediction improvement by this model over the single-layer method is about 10 percent.
\begin{figure}
\centering
\includegraphics[width=0.35\textwidth]{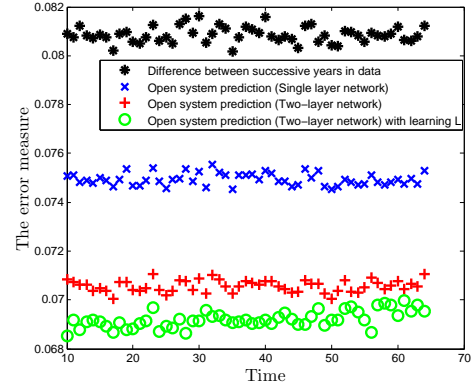}
\caption{Experiment over Twitter network with 5000 agents and 8 Hashtags.}
\label{fig:res_tweet5000}
\end{figure}
As in  Fig. (\ref{fig:res_NCreach}),  the better prediction performance of   the two-layer network is due to the topical similarities of the documents. This may be attributed to the fact that this additional layer helps us  discover certain topical interest of the users  using  similar Hashtags. For example, a Twitter user who has used a Hashtag \#Iranelection, is probably going to use a Hashtag \#neda (someone who died during the protests, after Iran's presidential election in 2009) later.
The lower prediction error by estimating the Laplacian matrix shows that the actual information diffusion structure can be different from the fixed connectivity imposed from the network.

Fig. (\ref{fig:res_tweet5000}) is the result of an experiment with  similar conditions to that of Fig. (\ref{fig:res_tweet1000}), if not for the number of agents being 5000. As may be  seen in the figure, the increased number of  agents improves the prediction error of the two-layer network relative to the error upper bound. This effect is mainly due to  two factors, (i) The increase in the number of  agents in the network improve the information diffusion and the mixing. (ii) The increased network size provides a better estimate of the diffusion constants. The average prediction improvement achieved by  the two-layer network is about 13 percent. The prediction by first estimating the Laplacian matrix, has about 15 percent improvement over the single layer prediction method. 

\begin{figure}[b]
\centering
\includegraphics[width=0.4\textwidth]{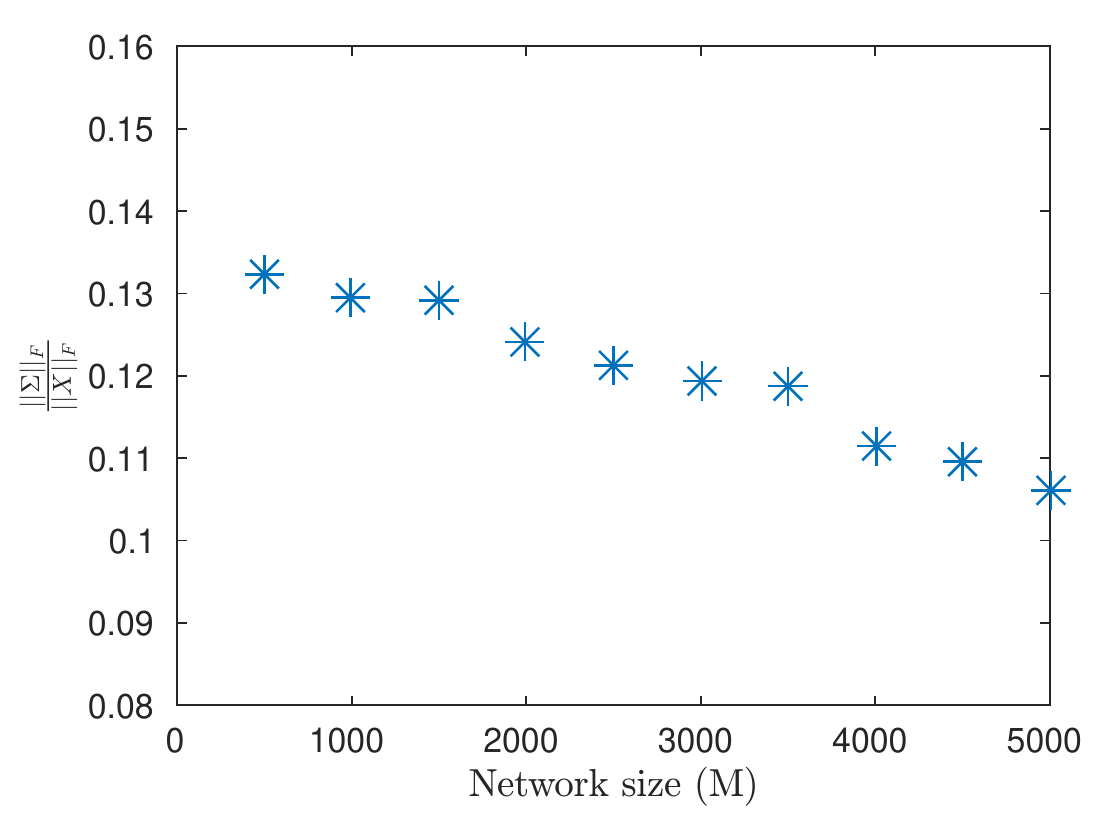}
\caption{Experiment over the Twitter network with different sizes.}
\label{fig:res_external}
\end{figure}
In Fig. (\ref{fig:res_external}), we have used the term $\frac{||\bm{\Sigma}||_F}{||\textbf{X}_0||_F}$ (with $\textbf{X}_0=\textbf{X}(0)$) to show the role of network size (number of nodes), on the extent to which the external sources affect the nodes. As the value of $\frac{||\bm{\Sigma}||_F}{||\textbf{X}_0||_F}$ term increases the second term on RHS of Eqn. (\ref{eq:open2}) will have a higher weight in determining the state of the nodes and consequently, the states of the agents will change by higher variance. 
The figure suggests that as the network size increases the external influence of input decreases. This result can be attributed to expanding the system borders into external areas. In other word, as we increase the network size, we are including a larger number of  factors which in turn, decreases the impact of auxiliary input.

\subsubsection{Kalman Prediction Filtering}
\label{subsec:result_KalmanFilter}
\begin{figure}[!t]
\centering
	\begin{subfigure}[t]{0.24\textwidth}
		\centering
		\includegraphics[width=0.9\textwidth]{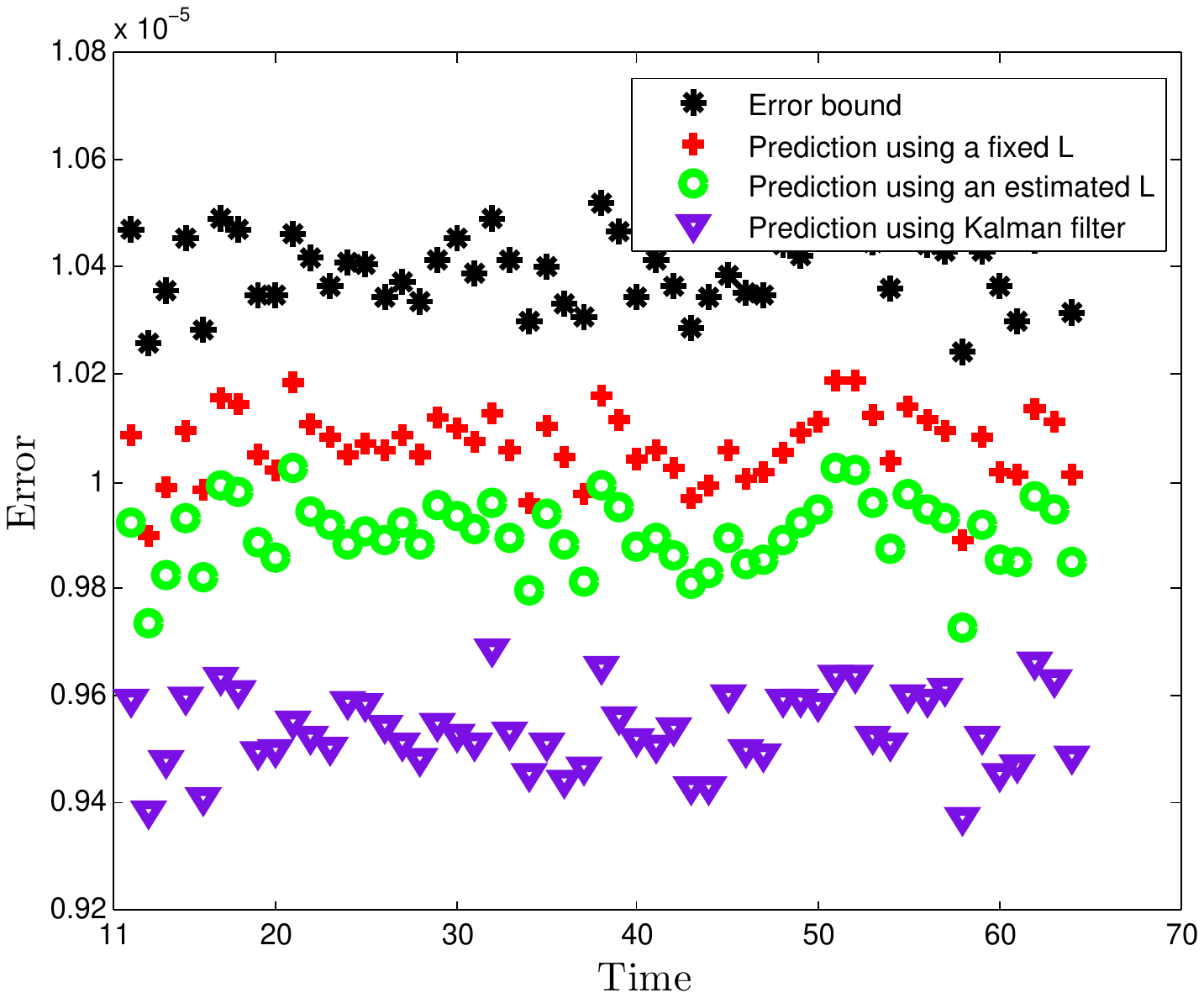}
		\caption{}
		\label{fig:kalman10}
	\end{subfigure} 
	\begin{subfigure}[t]{0.24\textwidth}
		\centering
		\includegraphics[width=0.9\textwidth]{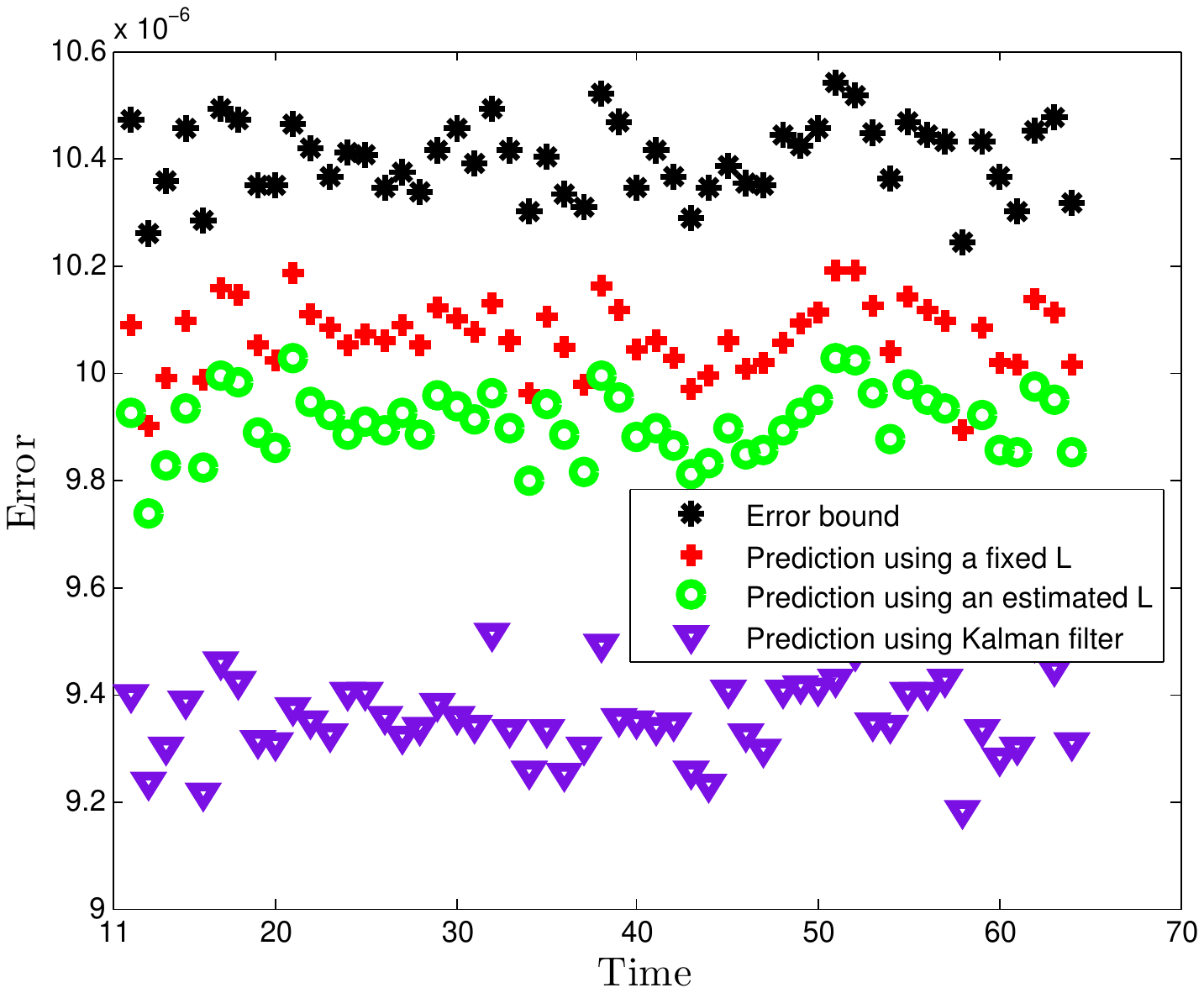}
		\caption{}
		\label{fig:kalman15}
	\end{subfigure}\\
	\begin{subfigure}[t]{0.24\textwidth}
		\centering
		\includegraphics[width=0.9\textwidth]{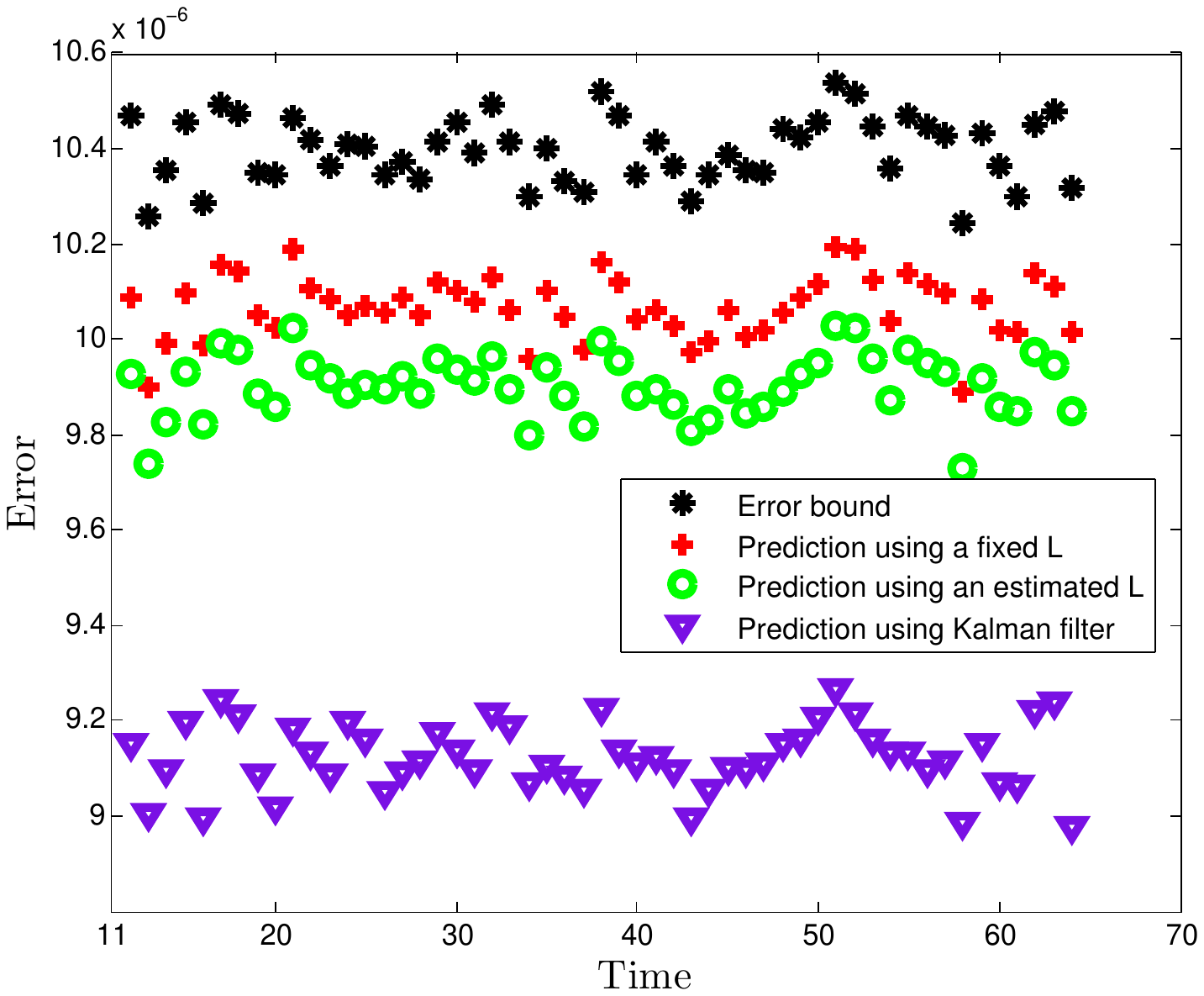}
		\caption{}
		\label{fig:kalman20}
	\end{subfigure} 
	\begin{subfigure}[t]{0.24\textwidth}
		\centering
		\includegraphics[width=0.9\textwidth]{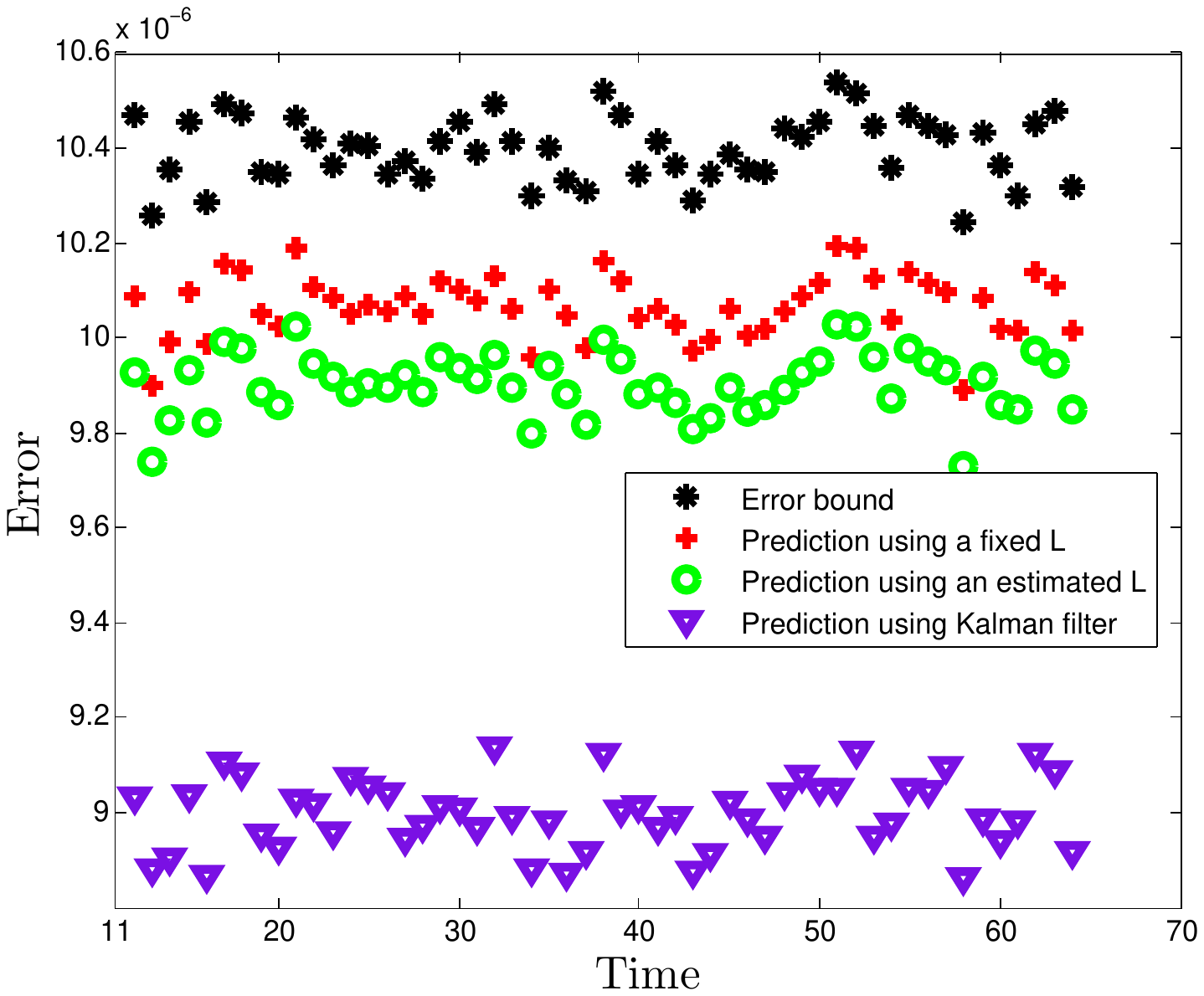}
		\caption{}
		\label{fig:kalman25}
	\end{subfigure}
	\caption{Experiment over Twitter network with 300 agents. Predicting the state of the agents using a fixed Laplacian matrix, using an estimated Laplacian matrix, and Prediction using Kalman filter with 10 percent, 15 percent, 20 percent, and 25 percent of observation of state of all the agents in Figs. (a), (b), (c) and (d) respectively.}
	\label{fig:kalmanResutl}
\end{figure}
Fig. (\ref{fig:kalmanResutl}) displays the results of an experiment in a small single layer dataset with 300 Twitter agents available. In this experiment, our goal is  to predict the topical state of all agents at each time point, when given the topical state of agents at previous time points. 

The black graphs in Fig. (\ref{fig:kalmanResutl}) show the resulting variation in the topical states of the agents overtime. 
This graph is an upper bound of our prediction error, and any reasonable prediction method should have a smaller error than this graph. We 
The graphs in red reflect the prediction error by just considering a fixed Laplacian matrix constructed from a Twitter follower/following network between agents. 
The graphs in green display the prediction error by first learning the Laplacian matrix from the learning data set, and then using the estimated Laplacian matrix in the prediction. We have used the data from time-point 1 until time-point 10 for learning the diffusion structure (the Laplacian matrix), and have used the data from time-point 11 to 64 for testing. Finally, the graphs in purple represent the prediction error with a Kalman predictor, knowing a fraction of the states of agents a priori.

Figs. (\ref{fig:kalman10}), (\ref{fig:kalman15}), (\ref{fig:kalman20}) and (\ref{fig:kalman25}) are the same experiments with different observation sizes of 10 percent, 15 percent, 20 percent and 25 percent of state of all the agents in the network respectively. 
As may be seen in the figures, the prediction error based on the estimated Laplacian matrix yields a lower error than fully trusting the connectivity structure in the network. 
This as expected, is due to static connectivity network (usually demonstrates the physical or online relation among the agents), falling short on effecting the actual underlying diffusion structure on the network.
Having a prior partial knowledge of the agents enabled us to use a Kalman predictor to further refine our prediction. As we increase the size of the observation set, in Figs. (\ref{fig:kalman15}), (\ref{fig:kalman20}) and (\ref{fig:kalman25}) we get a smaller error in the prediction. This effect is due to the Kalman gain which refines the state of the system by the estimation error of the agent's states.

\subsubsection{Algebraic connectivity of an interconnected network}
\label{subsec:result_Algebraic}
\begin{figure}[t]
\centering
\includegraphics[width=0.35\textwidth]{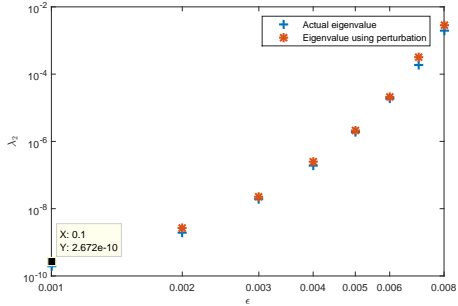}
\caption{Comparing the estimated value of the second smallest eigenvalue with the actual second smallest eigenvalue of the interconnected network.}
\label{fig:V2_compare}
\end{figure}
Fig. (\ref{fig:V2_compare}) compares the actual second smallest eigenvalue (so called fiddler's constant) of Laplacian matrix of an interconneted network with the estimated value using Eqn. (\ref{eq:Lperturbation4}). The x-axis shows different values of the $\epsilon$, and the y-axis shows the second smallest eigenvalues of the supra-Laplacian matrix associated.
As may be seen from the figure the estimated values closely match the actual values.
Fig. (\ref{fig:V2_compare2}) shows an experiment on the second dataset with different values of $\epsilon$, and the effect of weak inter-layer connectivity on the state prediction error. As seen in Fig. (\ref{fig:V2_compare2}), for small values of $\epsilon$, the interconnected network is a single layer network and the prediction error is close to the single layer prediction error. However, as we increase the value of the $\epsilon$, the inter-layer connectivity is getting stronger, and the prediction error decreases to the prediction error of a multi-layer network.

Based on Figs. (\ref{fig:V2_compare}, \ref{fig:V2_compare2}), one can use the Eqn. (\ref{eq:Lperturbation4}) to understand the strength of inter-layer connectivity in an interconnected network, and compute the algebraic connectivity. The algebraic connectivity in this case, helps us understand if the inter-layer connectivity is sequentially connecting all the intra-layers.

\begin{figure}[t]
\centering
\includegraphics[width=0.35\textwidth]{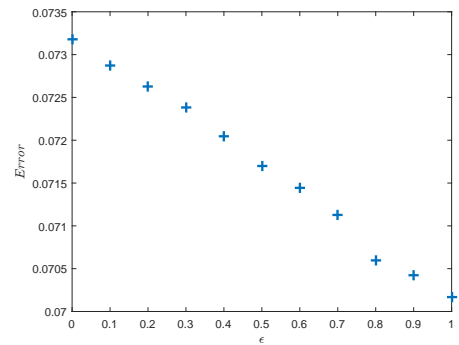}
\caption{The effect of weak inter-layer connectivity on the state prediction error.}
\label{fig:V2_compare2}
\end{figure}
\section{Conclusion}
\label{sec:conclusion}
In this work, we address the dynamics of an  inter-connected network by modeling changes in the topical states of agents using diffusion processes in the network. The intuition behind this model lies in the information diffusing on account of the possible interactions of the agents, which in turn may change their topical states over time. To further generalize this idea, we propose to follow the different interactions of the agents in separate layers. We also consider the similarity of the data flowing in the network as another network layer that may help explain information diffusion in the interconnected network. To this end, we proposed a diffusion equation for the three-layer interconnected network. Moreover, we consider the external effect on each node by assuming that the whole system is a massive Brownian particle. In this model, the changes in the state of each node is a function of its interactions with other nodes and with the external effect modeled as Brownian motions. We test our method by experimenting on three real-world data sets. The results show that, the prediction for an interconnected network achieves a lower error. This is due to explicitly accounting for the data of the network as it also evolves among the agents. Also, we also showed that increasing the size of the network yields an improvement in the error.



\bibliography{bib}
\addcontentsline{toc}{section}{Bibliography}

\end{document}